\documentclass[journal]{IEEEtran}
\IEEEoverridecommandlockouts
\usepackage{amsmath,amsfonts}
\usepackage{algorithmic}
\usepackage{algorithm}
\usepackage{array}
\usepackage{textcomp}
\usepackage{stfloats}
\usepackage{url}
\usepackage{verbatim}
\usepackage{graphicx}
\usepackage{cite}
\usepackage{float}
\usepackage[table]{xcolor}
\newcommand{\ra}[1]{\renewcommand{\arraystretch}{#1}}
\usepackage{booktabs}
\usepackage{caption}
\usepackage{epsfig}
\usepackage{epstopdf}
\usepackage{subfigure}
\usepackage{amssymb}
\newtheorem{theorem}{Theorem}
\newtheorem{lemma}{Lemma}

\allowdisplaybreaks

\begin{document}

\title{{J}oint {L}ocalization and {O}rientation with {T}riple-{B}eam {F}ingerprints in {M}assive {MIMO-OFDM}}

\author{Yu~Zhao,~\IEEEmembership{Graduate Student~Member,~IEEE,}
	Zhenzhou~Jin,~\IEEEmembership{Graduate Student~Member,~IEEE,}
	Jinke~Tang,~\IEEEmembership{Graduate Student~Member,~IEEE,}
	Li~You,~\IEEEmembership{Senior~Member,~IEEE,}
	Chen~Sun,~\IEEEmembership{Member,~IEEE,}
	Xiang-Gen~Xia,~\IEEEmembership{Fellow,~IEEE,}
	and Xiqi~Gao,~\IEEEmembership{Fellow,~IEEE}

\thanks{Part of this work was presented at the 2026 IEEE Wireless Communications and Networking Conference Workshops\cite{YZLY2026}.
	
	Yu Zhao, Zhenzhou Jin, Jinke Tang, Li You, Chen Sun, and Xiqi Gao are with the National Mobile Communications Research Laboratory, Southeast University, Nanjing 210096, China, and also with the Purple Mountain Laboratories, Nanjing 211100, China (e-mail: yu\_zhao@seu.edu.cn, zzjin@seu.edu.cn, jktang@seu.edu.cn, lyou@seu.edu.cn, sunchen@seu.edu.cn, xqgao@seu.edu.cn).
	
	Xiang-Gen Xia is with the Department of Electrical and Computer Engineering, University of Delaware, Newark, DE 19716, USA (e-mail: xianggen@udel.edu).}
}

\maketitle

\begin{abstract}
With the widespread application of location-based services, fingerprint-based localization has demonstrated advantages in environments with complex signal propagation. Deep learning has significantly improved the efficiency of both offline training and online matching in localization processes. However, existing fingerprints only contain terminal position information without capturing motion states, and neural network designs have not fully incorporated structural features such as fingerprint sparsity. In this paper, we propose a triple-beam fingerprint (TBF) incorporating Doppler information and design a Transformer-based localization and orientation awareness network (LOA-Net) to simultaneously estimate user position and motion direction in massive multiple-input multiple-output (MIMO) orthogonal frequency division multiplexing (OFDM) systems. We first show the correlation between TBF and multipath information, and investigate the collinearity of different TBFs, demonstrating that TBF is an effective small-size sparse fingerprint. Then, we propose LOA-Net containing a mask-augmented detection Transformer for regression (MaskDETR-Reg) module and a fusion-enhanced Transformer for direction classification (Fusion-TDC) module to process angle-delay domain information and Doppler domain information, respectively. Finally, in the simulation of indoor scenarios defined in 3GPP 38.901, the proposed method achieves significantly better localization accuracy than weighted $K$-nearest neighbors (WKNN), 2D and 3D convolutional neural networks (CNNs), and achieves satisfactory motion direction estimation accuracy.
\end{abstract}

\begin{IEEEkeywords}
Massive MIMO, channel fingerprint, deep learning, positioning, orientation.
\end{IEEEkeywords}

\section{Introduction}

With the development of information technology, location-based services have found increasingly extensive applications in the industrial Internet of Things and autonomous driving \cite{LYXQ2024,FZAG2019,YZLY2025,JSAG2025,ZJLY2026,ZJLY20253}. However, the performance of Global Positioning Systems (GPS), which has been widely adopted, significantly deteriorates in obstacle-dense urban or indoor environments. On the other hand, wireless networks inherently possess advantages in location and sensing as well, in addition to their primary function for communications \cite{CADB2021,YYLY2025,FGDD2021,YYLY2024}.

Wireless localization systems estimate positions based on reference signals between base station (BS) and user terminal (UT), which mainly fall into two categories: geometry-based and fingerprint-based methods. The geometry-based methods utilize triangular geometric properties to estimate locations through received signal strength (RSS) \cite{JTMR2015,MZYL2021,SFTL2008}, time of arrival (TOA) \cite{HWLL2022,KTXW2015,CXJH2017}, time difference of arrival (TDOA) \cite{SLMH2015} and direction of arrival (DOA) \cite{LQAL2024,YWKH2018,YZMS2019}, among others. The fingerprint-based methods uniquely map certain features of reference signals (i.e., fingerprints) to UT coordinates, determining the UT position by matching measured fingerprints against a fingerprint database. In complicated non-line-of-sight (NLOS) environments, multipath information severely degrades the performance of geometry-based methods but provides richer fingerprints. RSS is a commonly used fingerprint, but it only reflects limited channel information with low discriminability and instability \cite{XGXY2022,MLSS2024,XGAL2024}. Channel state information (CSI) contains multipath information in both angle and delay domains \cite{ZJLY20252,LYXG2016,JTLY2025,TCWW2021,LYKL2020,ZJLY2025,CXLY2025}, enabling the generation of more distinctive fingerprints to improve localization accuracy. As a key technology in 5G and next-generation communication systems, massive multiple-input multiple-output (MIMO) orthogonal frequency division multiplexing (OFDM) can capture high-resolution angle-delay domain multipath information for high-precision localization. Recently, increasing research attention has been paid to multi-domain channel characteristics. Among them, the triple-beam (TB) domain channel incorporates Doppler domain \cite{DSWW2021,JTXG2024,DSLS2023,XCLY2026}, which not only enables the generation of higher-resolution tensor fingerprints but also provides potential for UT motion-state awareness.

In addition to the inherent characteristics of fingerprints, the storage of fingerprint databases and matching methods are also critical factors affecting the performance of wireless localization. If the fingerprint matrix (tensor) is regarded as a single- (or multi-) channel image, and the coordinates are regarded as the labels, then the matching problem between them can be treated as a computer vision task. Consequently, many studies perform fingerprint localization through deep learning \cite{XSXG2018,XSCW2019,CWXY2021,BTMN2023}. Early research employed simple clustering algorithms such as the weighted $K$-nearest neighbors (WKNN) method to estimate positions \cite{XSXG2018}. Recent studies pay more attention to fingerprint features, adjust the neural network according to the fingerprint structure, and design more efficient localization methods, such as the deep convolutional neural network (CNN) \cite{XSCW2019}, the 3D CNN \cite{CWXY2021}, and the cooperative network based on the auto-encoder (AE) \cite{BTMN2023}.

However, the aforementioned methods fail to account for the impact of fingerprint sparsity on the network. Traditional CNNs rely on the local correlation of data, whereas sparse fingerprints disrupt spatial continuity, leading to convergence to local optima. Additionally, while previous studies have relied on inertial measurement units (IMU) \cite{BZHS2023,JCBZ2022} or continuous CSI recordings \cite{ZZML2021} to estimate UT motion states, these methods often require additional hardware or complex processing. In contrast, the TB domain channel exhibits inherent potential for direct and hardware-free motion-state perception. Therefore, we propose a method that leverages triple-beam fingerprint (TBF) to simultaneously achieve localization and orientation awareness in massive MIMO-OFDM systems. Table \ref{tb:InnovationComparison} presents a comparison between the proposed method and the existing methods.
\newcolumntype{L}{>{\hspace*{-\tabcolsep}}l}
\newcolumntype{R}{c<{\hspace*{-\tabcolsep}}}
\newcolumntype{C}{>{\centering\arraybackslash}p{0.8cm}}
\definecolor{lightblue}{rgb}{0.93,0.95,1.0}
\definecolor{lightgreen}{rgb}{0.95,1.0,0.93}
\begin{table}[htbp]
	\captionsetup{font=footnotesize}
	\caption{Fingerprint Matching Algorithm Comparison}\label{tb:InnovationComparison}
	\centering
	\ra{1.5}
	\scriptsize
	\begin{tabular}{LLCR}
		\toprule
		Technology & Orientation & Hardware & Limitations \\
		\rowcolor{lightblue}
		\midrule
		WKNN & \multicolumn{1}{c}{No} & No & Limited accuracy\\
		CNN & \multicolumn{1}{c}{No} & No & Poor handling of sparsity, local optima \\
		\rowcolor{lightblue}
		3D CNN & \multicolumn{1}{c}{No} & No & Poor handling of sparsity, local optima\\
		AE & \multicolumn{1}{c}{No} & No & High complexity, not tailored for sparsity \\
		\rowcolor{lightblue}
		IMU & \multicolumn{1}{c}{Yes} & Yes & Extra cost and power \\
		Continuous CSI & \multicolumn{1}{c}{Yes} & No & Error propagation\\
		\rowcolor{lightblue}
		\textbf{LOA-Net} (ours) & \multicolumn{1}{c}{Yes} & No & -- \\
		\bottomrule
	\end{tabular}
\end{table}

In this paper, we employ the TB domain channel power tensor as fingerprints, which contain information from the angle, delay, and Doppler domains. We first transform the space-frequency-time fingerprint (SFTF) into the TB domain to reduce fingerprint dimensions, and prove that TBFs at different locations maintain excellent discriminability. Through the concept of collinearity, we prove that TBFs with a smaller size achieve equivalent localization performance to SFTFs. Finally, we propose a Transformer-based localization and orientation awareness network (LOA-Net), which achieves superior positioning accuracy compared to conventional WKNN and CNN networks while maintaining effective motion direction estimation.

The main contributions are summarized as follows:

\begin{itemize}
	\item For massive MIMO-OFDM systems, we prove that TBF is related to multipath signal power, delay, angle of arrival, and Doppler frequency at corresponding locations, demonstrating good discriminability in scenarios with rich multipath information. Moreover, compared to SFTF, TBF has a smaller size and better sparsity while maintaining identical collinearity.
	
	\item We propose MaskDETR-Reg, a Mask-augmented detection Transformer (DETR) for the regression module, to process angle-delay domain information in TBF for localization. According to the mask fusion strategy in the backbone, MaskDETR-Reg can be divided into two variants: early-concat fusion MaskDETR (EC-MaskDETR), which achieves faster convergence, and dual-embedding fusion MaskDETR (DE-MaskDETR), which provides higher accuracy.
	
	\item We propose a fusion-enhanced Transformer for direction classification (Fusion-TDC) module to process Doppler information in TBF for orientation awareness. The estimation results from MaskDETR-Reg serve as prior information to assist Fusion-TDC. MaskDETR-Reg and Fusion-TDC together form LOA-Net to realize localization and orientation awareness simultaneously.
\end{itemize}

The rest of this paper is summarized as follows. In Section \ref{sec:system model}, the system model is constructed. In Section \ref{sec:Fingerprint Features}, we introduce the detailed features of TBF. In Section \ref{sec:LOA architecture}, the architecture of LOA is introduced. Section \ref{sec:simulation results} and Section \ref{sec:conclusion} show the simulation results and conclusions, respectively.

\textit{Notations:} Lowercase $x$, lowercase bold $\mathbf{x}$, uppercase bold $\mathbf{X}$, and calligraphy $\mathcal{X}$ represent scalar, vector, matrix, and tensor, respectively. We use superscripts $(\cdot)^T$, $(\cdot)^H$, and $(\cdot)^*$ to denote the transpose, conjugate-transpose, and conjugate operations, respectively. $\left[\cdot\right]$ represents the operation of fetching elements. For example, $\left[\mathbf{x}\right]_i$, $\left[\mathbf{X}\right]_{i,j}$ and $\left[\mathcal{X}\right]_{i,j,k}$ represent the $i\text{th}$ element of vector $\mathbf{x}$, the $(i,j)\text{th}$ element of matrix $\mathbf{X}$, and the $(i,j,k)\text{th}$ element of the tensor $\mathcal{X}$, respectively, where the element indices start with $0$. $\left [  \cdot \right ] _{i_0,i_1, \cdots ,i_{k-1},:,i_{k+1},\cdots ,i_{\mathrm{M}}}$ is the vector of $I_k$ elements of the $k\text{th}$ dimension of the tensor with the indices $i_0,i_1, \cdots ,i_{k-1},i_{k+1},\cdots ,i_{\mathrm{M}}$ on the other dimensions fixed. $\mathbb{R}^{M\times \cdots \times N}$ and $\mathbb{C}^{M\times \cdots \times N}$ denote the $M\times \cdots \times N$ dimensional real and complex vector spaces, respectively. We use $\bar{\imath} = \sqrt{-1}$ to denote the imaginary unit. $\left\lfloor x\right\rfloor$ denotes the downward rounding of $x$. The notation $\triangleq $ is used for definitions. $\left\langle b\right\rangle_a$ denotes the modulo operation of $b$ mod $a$. We adopt $\mathbf{I}_{N}$ to denote the $N\times N$ dimensional identity matrix. We adopt $\mathbf{0}$ to denote the zero vector, matrix, or tensor. $\odot$, $\otimes$, $\bullet$, $\circ_{n}$ and $\diamond_{n}$ denote the Hadamard product, Kronecker product, outer product, $n$ mode product and $n$-dimensional Einstein product, respectively. $\mathbb{E}\{\cdot\}$ denotes the expectation operation. $\text{Diag}\{\cdot\}$ and $\text{Tr}\{\cdot\}$ represent taking the main diagonal elements and the trace operation, respectively. $\delta (\cdot)$ denotes the delta function. $\lim_{a\to\infty}$ denotes the limit.

\section{System Model}\label{sec:system model}

In this section, we consider a massive MIMO-OFDM system model and obtain a multidimensional fingerprint containing multiple multipath information in the TB domain.

\subsection{System Configuration}

We consider a single-cell time-division duplex (TDD) massive MIMO-OFDM system with a centrally located BS and $K$ UTs. We use $\mathfrak{K}=\{0,1,\cdots,K-1\}$ to represent the set of all UTs. The BS is equipped with a uniform planar array (UPA) placed in the X-Z plane, which has $M_{\mathrm{r}}$ antennas in each row and $M_{\mathrm{c}}$ antennas in each column, and a total of $A=M_{\mathrm{r}}M_{\mathrm{c}}$ antennas. The spacings between two adjacent antennas in each row and each column are $d_{\mathrm{r}}$ and $d_{\mathrm{c}}$, respectively. Each UT is equipped with a single antenna and is distributed randomly within the cell. We adopt OFDM modulation with the number of subcarriers being $N_{\mathrm{c}}$ and the length of sampling interval being $T_{\mathrm{s}}$. We set the length of the cyclic prefix as $N_{\mathrm{g}}$, then the subcarrier interval $\Delta f=\frac{1}{N_{\mathrm{c}}T_{\mathrm{s}}}$ and the length of each OFDM symbol $T_{\mathrm{sym}}=(N_{\mathrm{c}}+N_{\mathrm{g}})T_{\mathrm{s}}$. The wavelength of the carrier is $\lambda_\mathrm{s}$. In the time domain, we have $N_{\mathrm{f}}$ slots in each frame, and each slot is divided into $N_{\mathrm{s}}$ OFDM symbols. That is, each frame contains $N_{\mathrm{t}}=N_{\mathrm{f}}N_{\mathrm{s}}$ OFDM symbols. For convenience, we still follow the method in \cite{DSLS2023,JTXG2024}, combining the current slot and the previous $(N_{\mathrm{f}}-1)$ slots into a temporary frame for localization.

In this work, we assume that all UTs are located in the far-field region of the UPA and that perfect CSI is available at the BS. Notably, in practical applications, an estimated CSI must first be obtained through channel estimation before fingerprint localization. Due to the effect of channel estimation error, the CSI used for localization is noise-contaminated. The detailed procedure for TB domain channel estimation in multi-UT scenarios under TDD mode can be referred to \cite{JTXG2024}.

\subsection{Tensor Channel Model}

Consider the scenario of the uplink, and the signal sent by each UT goes through multiple paths to the BS. The frame structure and massive MIMO-OFDM system diagram are shown in Fig. \ref{fig:system_diagram}.

We begin by focusing on the scenario within a single OFDM symbol. The signal transmitted by the $k$-th UT is received by the BS after traversing $P$ propagation paths, which include reflections and scattering. We use $\mathfrak{P}=\{0,1,\cdots,P-1\}$ to represent the set of all paths. These paths are assumed to be wide-sense stationary and independent of each other, and the propagation environment may contain a line-of-sight (LOS) path or consist entirely of NLOS paths.

\begin{figure}[!t]
	\centering
	\includegraphics[width=3in]{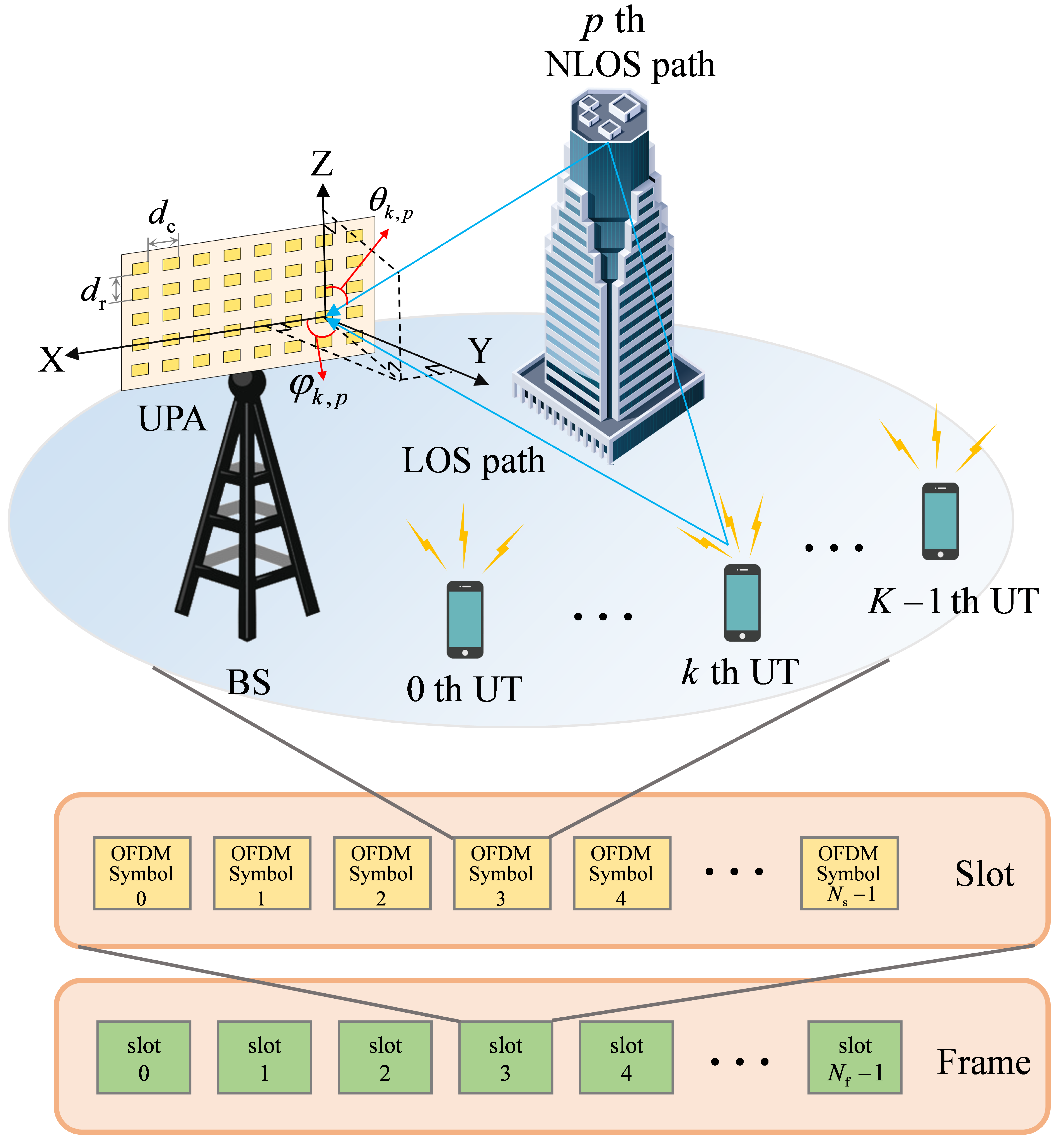}
	\caption{Frame structure and massive MIMO-OFDM system diagram.}
	\label{fig:system_diagram}
\end{figure}

In Fig. \ref{fig:system_diagram}, we decompose the DOA of the $p$-th path into its elevation component $\theta_{k,p}\in[0,\pi]$ in the vertical direction and its azimuth component $\varphi_{k,p}\in[0,\pi]$ in the horizontal direction. The array response vector for each column and row of antennas can be expressed as follows
\begin{equation}
	\mathbf{f}^{c}(\theta_{k,p})=\left[1,e^{-\bar{\imath}2\pi\frac{d_{\mathrm{r}}}{\lambda_c}\cos \theta_{k,p}},\cdots,e^{-\bar{\imath}2\pi\frac{d_{\mathrm{r}}}{\lambda_c}(M_{\mathrm{c}}-1)\cos\theta_{k,p}}\right]^T,
\end{equation}
\begin{equation}
	\begin{aligned}
		\mathbf{f}^{r}(\theta_{k,p},\varphi_{k,p})=&\left[1,e^{-\bar{\imath}2\pi\frac{d_{\mathrm{c}}}{\lambda_c}\sin\theta_{k,p}\cos\varphi_{k,p}},\right.\\
		&\left.\cdots,e^{-\bar{\imath}2\pi\frac{d_{\mathrm{c}}}{\lambda_c}(M_{\mathrm{r}}-1)\sin\theta_{k,p}\cos\varphi_{k,p}}\right]^T,
	\end{aligned}	
\end{equation}
where $\mathbf{f}^{c}(\theta_{k,p})\in\mathbb{C}^{M_{\mathrm{c}}\times 1}$ and $\mathbf{f}^{r}(\theta_{k,p},\varphi_{k,p})\in\mathbb{C}^{M_{\mathrm{r}}\times 1}$. The array response vector of UPA is obtained by combining the phases in both directions:
\begin{equation}\label{eq:array_vector}
	\mathbf{f}^{\phi}(\theta_{k,p},\varphi_{k,p})=\mathbf{f}^{c}(\theta_{k,p})\otimes \mathbf{f}^{r}(\theta_{k,p},\varphi_{k,p})\in\mathbb{C}^{A\times1},
\end{equation}
where $\left[\mathbf{f}^{\phi}(\theta_{k,p},\varphi_{k,p})\right]_{i}$ is the phase factor of the antenna in row $\lfloor \frac{i}{M_{\mathrm{r}}} \rfloor$ and column $\left\langle i\right\rangle_{M_{\mathrm{r}} }$. Denote the Doppler frequency on the $p$-th path of the $k$-th UT as $\nu_{k,p}$, and the phase factor is $f^{\nu}(t) = e^{\bar{\imath}2\pi\nu_{k,p}t}$. The delay of the $p$-th path to the BS is represented as $\tau_{k,p}$. The channel response vector for the $k$-th UT during one OFDM symbol is represented as
\begin{equation}
	\mathbf{h}_{k}(t,\tau)\!=\!\!\sum_{p=0}^{P-1}\!\beta_{k,p}\mathbf{f}^{\phi}(\theta_{k,p},\varphi_{k,p})f^{\nu}(t)\delta(\tau-\tau_{k,p})\!\in\!\mathbb{C}^{A\times 1},
\end{equation}
where $\beta_{k,p}$ represents the gain of the $k$-th UT on the $p$-th path, which follows a complex Gaussian distribution $\mathcal{CN}(0,\sigma^2_{k,p})$ with variance $\sigma^2_{k,p}$. In a frame consisting of $N_{\mathrm{t}}$ OFDM symbols, we assume that the channel state will vary between symbols but will remain constant within an individual symbol. Then the phase factor on the $n$-th OFDM symbol (i.e., $t\in [nT_{\mathrm{sym}},(n+1)T_{\mathrm{sym}})$) is
\begin{equation}\label{eq:doppler factor}
	f^{\nu}(t)\approx f^{\nu}_{k,p,n}=e^{\bar{\imath}2\pi\nu_{k,p}nT_{\mathrm{sym}}}.
\end{equation}
Assume that the duration of the cyclic prefix exceeds the delay of all UT paths, which means that $N_{\mathrm{g}}T_{\mathrm{s}}>\max_{k\in\mathfrak{K},p\in\mathfrak{P}} \left\{\tau_{k,p}\right\}$. The phase factor on the $c$-th subcarrier is represented as
\begin{equation}\label{eq:phase_factor}
	f^{\tau}_{k,p,c}=e^{-\bar{\imath} 2\pi c\tau_{k,p}\Delta f}.
\end{equation}
For UT $k$, the channel response vector on the $c$-th subcarrier of the $n$-th OFDM symbol is represented as \cite{DSLS2023,JTXG2024}
\begin{align}
	\mathbf{h}^{\mathrm{SFT}}_{k,c,n}\triangleq&\int\mathbf{h}_{k}(nT_{\mathrm{sym}},\tau)f^{\tau}_{k,p,c}\mathrm{d}\tau\nonumber\\
	=&\sum_{p=0}^{P-1}\beta_{k,p}\mathbf{f}^{\phi}(\theta_{k,p},\varphi_{k,p})f^{\tau}_{k,p,c}f^{\nu}_{k,p,n}\in\mathbb{C}^{A\times 1}.
\end{align}

In the proposed channel model, we incorporate information from the space, time and frequency domains, corresponding to $\mathbf{f}^{\phi}(\theta_{k,p},\varphi_{k,p})$, $f^{\tau}_{k,p,c}$ and $f^{\nu}_{k,p,n}$, respectively. In contrast to the SFT domain channel models in the previous paper \cite{JTXG2024}, we employ UPA to extend the space domain from 2D to 3D, enriching the fingerprint information available. Following (\ref{eq:array_vector}), we represent the frequency domain steering vector $\mathbf{f}^{\tau}(\tau_{k,p})$ and time domain steering vector $\mathbf{f}^{\nu}(\nu_{k,p})$ as
\begin{equation}
	\mathbf{f}^{\tau}(\tau_{k,p})=\left[f^{\tau}_{k,p,0},f^{\tau}_{k,p,1},\cdots,f^{\tau}_{k,p,N_{\mathrm{c}}-1}\right]^T,
\end{equation}
\begin{equation}
	\mathbf{f}^{\nu}(\nu_{k,p})=e^{ \bar{\imath}2\pi n_{\mathrm{T}}N_{\mathrm{s}} \nu_{k,p} T_{\mathrm{sym}} }\cdot\left[f^{\nu}_{k,p,0},f^{\nu}_{k,p,1},\cdots,f^{\nu}_{k,p,N_{\mathrm{t}}-1}\right]^T,
\end{equation}
where $\mathbf{f}^{\tau}(\tau_{k,p})\in\mathbb{C}^{N_{\mathrm{c}}\times 1}$, $\mathbf{f}^{\nu}(\nu_{k,p})\in\mathbb{C}^{N_{\mathrm{t}}\times 1}$, and $n_{\mathrm{T}}$ is the first OFDM symbol of this frame. We specify the range of multipath delay and Doppler frequency \cite{JTXG2024}:
\begin{align}
	\tau_{k,p}&\in\left[0,\frac{1}{N_{\mathrm{c}}\Delta f},\frac{2}{N_{\mathrm{c}}\Delta f},\cdots,\frac{N_{\mathrm{c}}-1}{N_{\mathrm{c}}\Delta f}\right],\nonumber\\ 
	\nu_{k,p}&\in\left[\frac{0-N_{\mathrm{f}}/2}{N_{\mathrm{t}}T_{\mathrm{sym}}},\frac{1-N_{\mathrm{f}}/2}{N_{\mathrm{t}}T_{\mathrm{sym}}},\cdots,\frac{N_{\mathrm{f}}-1-N_{\mathrm{f}}/2}{N_{\mathrm{t}}T_{\mathrm{sym}}}\right].\nonumber
\end{align}
With these three steering vectors, we can represent the SFT domain tensor channel model of UT $k$ as follows.
\begin{equation}\label{eq:SFT channel}
	\mathcal{H}^{\mathrm{SFT}}_k=\sum_{p=0}^{P-1}\beta_{k,p}\mathbf{f}^{\phi}(\theta_{k,p},\varphi_{k,p})\bullet\mathbf{f}^{\tau}(\tau_{k,p})\bullet\mathbf{f}^{\nu}(\nu_{k,p}),
\end{equation}
where the outer product of $\mathcal{A}\in\mathbb{C}^{I_0 \times I_1 \times \cdots \times I_M}$ and $\mathcal{B}\in\mathbb{C}^{J_0 \times J_1 \times \cdots \times J_N}$ is $\mathcal{A}\bullet\mathcal{B}\in\mathbb{C}^{I_0 \times I_1 \times \cdots \times I_M \times J_0 \times J_1 \times\cdots\times J_N}$, defined as
\begin{equation}
	\left[\mathcal{A}\bullet\mathcal{B}\right]_{i_0,\cdots,i_M,j_0,\cdots,j_N}=\left[\mathcal{A}\right]_{i_0,\cdots,i_M}\left[\mathcal{B}\right]_{j_0,\cdots,j_N}.
\end{equation}
The tensor $\mathcal{H}^{\mathrm{SFT}}_k\in\mathbb{C}^{A\times N_{\mathrm{c}}\times N_{\mathrm{t}}}$ embodies channel information across the space, frequency, and time domains. But the SFT domain channel tensor comprises numerous elements, leading to a significant computational load. To address this, we transform the SFT domain channel tensor into the TB domain and utilize the sparse TB domain channel tensor to tackle the localization problem \cite{JTXG2024}. Then, we define the transform matrices corresponding to the three beam domains:
\begin{align}
	\left[\mathbf{W}^{\phi}_{M}\right]_{i,j}&\triangleq\frac{1}{\sqrt{M}}e^{-\bar{\imath}2\pi\frac{i(2j-M)}{2M}},\label{eq:angle transform matrices}\\
	\left[\mathbf{W}^{\tau}_{N_{\mathrm{c}},N_{\mathrm{g}}}\right]_{i,j}&\triangleq\frac{1}{\sqrt{N_{\mathrm{c}}}}e^{-\bar{\imath}2\pi\frac{ij}{N_{\mathrm{c}}}},\label{eq:delay transform matrices}\\
	\left[\mathbf{W}^{\nu}_{N_{\mathrm{t}},N_{\mathrm{f}}}\right]_{i,j}&\triangleq \frac{1}{\sqrt{N_{\mathrm{t}}}}e^{ \bar{\imath}2\pi(n_{\mathrm{T}}+\frac{i}{N_{\mathrm{s}}})\frac{(2j-N_{\mathrm{f}})}{2N_{\mathrm{f}}}}\label{eq:dopp transform matrices},
\end{align}
where $M$ denote $M_{\mathrm{c}}$ or $M_{\mathrm{r}}$, matrices $\mathbf{W}^{\phi}_{M}\in\mathbb{C}^{M\times M}$, $\mathbf{W}^{\tau}_{N_{\mathrm{c}},N_{\mathrm{g}}}\in\mathbb{C}^{N_{\mathrm{c}}\times N_{\mathrm{g}}}$ and $\mathbf{W}^{\nu}_{N_{\mathrm{t}},N_{\mathrm{f}}}\in\mathbb{C}^{N_{\mathrm{t}}\times N_{\mathrm{f}}}$. The detailed derivation of the transformation can be found in \cite{LYXG2015,LYXG2016,JTXG2024,CWXY2021}.

Rerepresent (\ref{eq:SFT channel}) in terms of TB domain channel $\mathcal{H}^{\mathrm{TB}}_k$,
\begin{equation}\label{eq:TB to SFT}
	\begin{aligned}
		\mathcal{H}^{\mathrm{SFT}}_k=&\mathbf{W}^{\nu}_{N_{\mathrm{t}},N_{\mathrm{f}}}\!\circ_{3}\!\Big(\!\mathbf{W}^{\tau}_{N_{\mathrm{c}},N_{\mathrm{g}}}\!\circ_{2}\!\big((\mathbf{W}^{\phi}_{M_{\mathrm{c}}}\otimes\mathbf{W}^{\phi}_{M_{\mathrm{r}}})\!\circ_{1}\!\mathcal{H}^{\mathrm{TB}}_k\big)\Big)\\
		&\times \sqrt{M_{\mathrm{c}}M_{\mathrm{r}}N_{\mathrm{c}}N_{\mathrm{t}}}\\
		=&\sqrt{M_{\mathrm{c}}M_{\mathrm{r}}N_{\mathrm{c}}N_{\mathrm{t}}}\cdot\mathcal{W}\diamond_{3}\mathcal{H}^{\mathrm{TB}}_k,
	\end{aligned}
\end{equation}
where $\mathcal{H}^{\mathrm{TB}}_k\in\mathbb{C}^{A\times N_{\mathrm{g}}\times N_{\mathrm{f}}}$ is the TB domain channel of the $k$-th UT whose $(i,j,l)$-th element is $\sum_{p=0}^{P-1}\beta_{k,p}$. The triple-beam domain refers to the transformed angle, delay, and Doppler beam domains, which correspond one-to-one with the space, frequency, and time domains of the SFT domain. In the case of $n$ mode product between $\mathbf{A}\in\mathbb{C}^{I\times J_n}$ and $\mathcal{B}\in\mathbb{C}^{J_0\times J_1\times \cdots \times J_n \times \cdots \times J_N}$, we have $\mathbf{A}\circ_{n}\mathcal{B}\in\mathbb{C}^{J_0\times J_1\times \cdots \times I \times \cdots \times J_N}$ and each of its elements
\begin{equation}
	\left[\mathbf{A}\circ_{n}\mathcal{B}\right]_{J_0,J_1,\cdots,i,\cdots,J_N}=\sum_{j_n=0}^{J_n-1}\left[\mathbf{A}\right]_{i,j_n}\left[\mathcal{B}\right]_{J_0,J_1,\cdots,j_n,\cdots,J_N}.
\end{equation}
We define $\mathcal{W}\in\mathbb{C}^{A\times N_{\mathrm{c}}\times N_{\mathrm{t}} \times A \times N_{\mathrm{g}} \times N_{\mathrm{f}}}$ as a triple-beam tensor, where each element is denoted by 
\begin{align}
	&\left[\mathcal{W}\right]_{a,c,n,a',c',n'}\nonumber\\
	\triangleq&\left[\mathbf{W}^{\phi}_{M_{\mathrm{c}}}\otimes\mathbf{W}^{\phi}_{M_{\mathrm{r}}}\right]_{a,a'}\left[\mathbf{W}^{\tau}_{N_{\mathrm{c}},N_{\mathrm{g}}}\right]_{c,c'}\left[\mathbf{W}^{\nu}_{N_{\mathrm{t}},N_{\mathrm{f}}}\right]_{n,n'}.
\end{align}
For the Einstein product of $\mathcal{W}$ and $\mathcal{H}^{\mathrm{TB}}_k$, we have $\mathcal{W}\diamond_{3}\mathcal{H}^{\mathrm{TB}}_k\in\mathbb{C}^{A\times N_{\mathrm{c}}\times N_{\mathrm{t}}}$ and each of its elements
\begin{equation}
	\begin{aligned}
		&\left[\mathcal{W}\diamond_{3}\mathcal{H}^{\mathrm{TB}}_k\right]_{i_1,i_2,i_3}\\
		=&\sum_{j_1=0}^{A-1}\sum_{j_2=0}^{N_{\mathrm{g}}-1}\sum_{j_3=0}^{N_{\mathrm{f}}-1}\left[\mathcal{W}\right]_{i_1,i_2,i_3,j_1,j_2,j_3}\left[\mathcal{H}^{\mathrm{TB}}_k\right]_{j_1,j_2,j_3}.
	\end{aligned}
\end{equation}

The multipath channel in wireless communication is influenced by the position of UT and is determined by the scattering environment between UT and BS. Variations in the scattering environment will result in distinct channel fingerprints. The tensor $\mathcal{H}^{\mathrm{SFT}}_k$ contains a wealth of channel information, including RSS, DOA, TOA, and Doppler frequency. Different positions of UT will yield different tensors $\mathcal{H}^{\mathrm{SFT}}_k$. To achieve more stable channel information, we introduce the second-order statistics of $\mathcal{H}^{\mathrm{SFT}}_k$:
\begin{equation}
	\mathcal{X}_{k}\triangleq\mathbb{E}\left\{\mathcal{H}^{\mathrm{SFT}}_k\bullet (\mathcal{H}^{\mathrm{SFT}}_k)^{*}\right\},
\end{equation}
which represents the SFT domain channel covariance tensor. As a kind of statistical CSI, $\mathcal{X}_{k}\in\mathbb{C}^{A\times N_{\mathrm{c}}\times N_{\mathrm{t}} \times A \times N_{\mathrm{c}} \times N_{\mathrm{t}}}$ mitigates the effects of small-scale fading and remains relatively consistent over time, making it suitable for a fingerprint, referred to as SFTF. Since tensor $\mathcal{X}_{k}$ has a high dimension, its computation is substantial. This is the reason why we utilize the sparsity of the TB domain channel to simplify the calculation, and the sparsity of $\mathcal{H}^{\mathrm{TB}}_k$ has been proved in \cite{JTXG2024}. We define the following TBF,
\begin{equation}\label{eq:TBF}
	\mathcal{F}_{k}\triangleq\mathbb{E}\left\{\mathcal{H}^{\mathrm{TB}}_k\odot (\mathcal{H}^{\mathrm{TB}}_k)^{*}\right\},
\end{equation}
which is also the TB domain channel power tensor. This TBF $\mathcal{F}_{k}\in\mathbb{C}^{A\times N_{\mathrm{g}}\times N_{\mathrm{f}}}$ serves as a stable fingerprint that incorporates multiple channel features. From a physical perspective, the TBF contains information from the angle, delay, and Doppler beam domains; therefore, we refer to it as the triple-beam fingerprint. Remarkably, TBF achieves positioning accuracy comparable to the SFTF while being more compact in size. In Section \ref{sec:Fingerprint Features}, we will evaluate the viability of $\mathcal{F}_{k}$ as a fingerprint by demonstrating the distinctiveness of various TBFs, and collinearity among TBF and among SFTF.

\section{Fingerprint Features in TB Domain}\label{sec:Fingerprint Features}

This section provides a detailed analysis of the features of TBF. We first prove that TBF is an effective fingerprint, capturing multiple channel information and exhibiting strong discriminability. This highlights the potential of TBF for accurate positioning. Then, we establish the collinearity between sparse TBF and SFTF. This indicates that TBF achieves comparable localization performance despite its reduced dimensionality.

\subsection{Discriminability of TBF}

The accuracy of fingerprint location is linked to the discriminability of fingerprints. TBF encompasses various CSI in the angle, delay, and Doppler domains. In contrast to fingerprints that rely on RSS, DOA, or TOA, TBF offers enhanced resolution, resulting in more pronounced differentiation of fingerprints generated by UT at various locations. For convenience, we define unit vector $\mathbf{\Lambda} _{L}^{a}\in\mathbb{R}^{L\times 1}$,
\begin{equation}
	[\mathbf{\Lambda} _{L}^{a}]_{i}=\begin{cases}
		1, & i=a \\
		0, & \text{else}
	\end{cases},
\end{equation}
where $L$ is the length of $\mathbf{\Lambda} _{L}^{a}$, and $a\in[0,1,\dots,L-1]$. The theorem presented below demonstrates the high discriminability of TBF.

\begin{theorem}\label{theorem:1}
	For a massive MIMO-OFDM system, when all dimensions in the TB domain approach infinity, the channel power of each path will be concentrated at a certain location, which can be expressed as
	\begin{align}\label{eq:theorem 1}
		&\lim_{M_{\mathrm{c}},M_{\mathrm{r}},N_{\mathrm{c}},N_{\mathrm{t}}\to\infty}\Big(\mathcal{F}_{k}\nonumber\\
		-&\sum_{p=0}^{P-1}\!\sigma^2_{k,p}\!\!\left(\!\mathbf{\Lambda} _{M_{\mathrm{c}}}^{\breve{c}_{k,p}}\!\!\otimes\!\mathbf{\Lambda} _{M_{\mathrm{r}}}^{\breve{r}_{k,p}}\!\right)\!\bullet\!\mathbf{\Lambda} _{N_{\mathrm{g}}}^{(\frac{\tau_{k,p}}{T_{\mathrm{s}}})}\!\!\bullet\!\mathbf{\Lambda} _{N_{\mathrm{f}}}^{(N_{\mathrm{t}}\nu_{k,p} T_{\mathrm{sym}}\!+\!\frac{N_{\mathrm{f}}}{2})}\Big)=\mathbf{0},
	\end{align}where $\frac{\tau_{k,p}}{T_{\mathrm{s}}}$ and $N_{\mathrm{t}}\nu_{k,p} T_{\mathrm{sym}}$ can be thought of as integers since $T_{\mathrm{s}}$ and $T_{\mathrm{sym}}$ are small and $N_{\mathrm{t}}$ and $N_{\mathrm{f}}$ are large, and
	\begin{align}
		\breve{c}_{k,p}=&\frac{M_{\mathrm{c}}d_{\mathrm{r}}}{\lambda_c}\cos\theta_{k,p}+\frac{M_{\mathrm{c}}}{2},\\
		\breve{r}_{k,p}=&\frac{M_{\mathrm{r}}d_{\mathrm{c}}}{\lambda_c}\sin\theta_{k,p}\cos\varphi_{k,p}+\frac{M_{\mathrm{r}}}{2},
	\end{align}
	which may be thought of integers when $M_{\mathrm{c}}$ and $M_{\mathrm{r}}$ are large. If only one dimension in the TB domain approaches infinity, the channel power tensor will be concentrated in some specific positions of this dimension.
\end{theorem}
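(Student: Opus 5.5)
We will invert the transform in (\ref{eq:TB to SFT}) to express $\mathcal{H}^{\mathrm{TB}}_k$ explicitly in terms of the path parameters, and then take the elementwise second moment. Since $\mathbf{W}^{\phi}_{M}$ is unitary, and $\mathbf{W}^{\tau}_{N_{\mathrm{c}},N_{\mathrm{g}}}$, $\mathbf{W}^{\nu}_{N_{\mathrm{t}},N_{\mathrm{f}}}$ have orthonormal columns, we can apply their conjugate transposes mode by mode to (\ref{eq:SFT channel}). This gives
\begin{equation*}
\mathcal{H}^{\mathrm{TB}}_k=\frac{1}{\sqrt{M_{\mathrm{c}}M_{\mathrm{r}}N_{\mathrm{c}}N_{\mathrm{t}}}}\sum_{p=0}^{P-1}\beta_{k,p}\,\mathbf{g}^{\phi}_{k,p}\bullet\mathbf{g}^{\tau}_{k,p}\bullet\mathbf{g}^{\nu}_{k,p},
\end{equation*}
where $\mathbf{g}^{\phi}_{k,p}=\big((\mathbf{W}^{\phi}_{M_{\mathrm{c}}})^H\mathbf{f}^{c}(\theta_{k,p})\big)\otimes\big((\mathbf{W}^{\phi}_{M_{\mathrm{r}}})^H\mathbf{f}^{r}(\theta_{k,p},\varphi_{k,p})\big)$. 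The other two factors are $\mathbf{g}^{\tau}_{k,p}=(\mathbf{W}^{\tau}_{N_{\mathrm{c}},N_{\mathrm{g}}})^H\mathbf{f}^{\tau}(\tau_{k,p})$ and $\mathbf{g}^{\nu}_{k,p}=(\mathbf{W}^{\nu}_{N_{\mathrm{t}},N_{\mathrm{f}}})^H\mathbf{f}^{\nu}(\nu_{k,p})$. The Kronecker structure of the angle transform lets us treat rows and columns separately.

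Next we form $\mathcal{F}_k$ via (\ref{eq:TBF}). Because the gains $\beta_{k,p}$ are independent and zero-mean with $\mathbb{E}\{\beta_{k,p}\beta_{k,q}^*\}=\sigma^2_{k,p}\delta_{pq}$, all cross-path terms vanish. We obtain
\begin{equation*}
\mathcal{F}_k=\sum_{p=0}^{P-1}\sigma^2_{k,p}\,\bar{\mathbf{g}}^{\phi}_{k,p}\bullet\bar{\mathbf{g}}^{\tau}_{k,p}\bullet\bar{\mathbf{g}}^{\nu}_{k,p},
\end{equation*}
with $\bar{\mathbf{g}}=\frac{1}{\sqrt{L}}\mathbf{g}\odot\mathbf{g}^*$ normalized by the corresponding length $L$. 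It therefore suffices to show that each normalized one-dimensional factor converges entrywise to a unit vector $\mathbf{\Lambda}$. Once each factor converges, the outer products of finitely many bounded factors also converge, and summing over finitely many paths gives (\ref{eq:theorem 1}). Proving the factor-wise statement also yields the final claim, in which only one dimension grows.

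Each factor is a geometric sum, i.e. a Dirichlet kernel. For the column angle factor, the $j$-th entry is
\begin{equation*}
\frac{1}{\sqrt{M_{\mathrm{c}}}}\sum_{m=0}^{M_{\mathrm{c}}-1}e^{\bar{\imath}2\pi m\left(\frac{j}{M_{\mathrm{c}}}-\frac12-\frac{d_{\mathrm{r}}}{\lambda_c}\cos\theta_{k,p}\right)},
\end{equation*}
whose squared magnitude divided by $M_{\mathrm{c}}$ is
\begin{equation*}
\frac{1}{M_{\mathrm{c}}^2}\left|\frac{\sin\big(\pi(j-\breve{c}_{k,p})\big)}{\sin\big(\pi(j-\breve{c}_{k,p})/M_{\mathrm{c}}\big)}\right|^2 .
\end{equation*}
This equals $1$ at $j=\breve{c}_{k,p}$ and $0$ at every other integer $j$ (modulo $M_{\mathrm{c}}$, which requires $d_{\mathrm{r}}\le\lambda_c/2$ so that the index falls in range). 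The row factor behaves the same way with index $\breve{r}_{k,p}$. The delay factor is an exact DFT of a grid-aligned exponential: with $\tau_{k,p}=l T_{\mathrm{s}}$ it gives $\sqrt{N_{\mathrm{c}}}\,\mathbf{\Lambda}_{N_{\mathrm{g}}}^{l}$. The Doppler factor, after the phase offset $n_{\mathrm{T}}$ in $\mathbf{W}^{\nu}$ cancels against the prefactor in $\mathbf{f}^{\nu}$, gives a kernel peaked at index $N_{\mathrm{t}}\nu_{k,p}T_{\mathrm{sym}}+N_{\mathrm{f}}/2$.

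The main obstacle is the non-integer case. When the peak index (e.g. $\breve{c}_{k,p}$) is not exactly an integer, the kernel leaks power into neighbouring bins. We will handle this by bounding the off-peak entries by $\frac{1}{M^2\sin^2(\pi\epsilon/M)}\approx\frac{1}{\pi^2\epsilon^2}$, where $\epsilon$ is the distance to the nearest grid point. We then argue, as the statement does, that the spatial grid spacing $1/M$ shrinks as $M\to\infty$, so the true parameter is asymptotically grid-aligned and the leakage vanishes. This makes the ``may be thought of as integers'' provision precise. A second point to check is that the Doppler transform matrix of size $N_{\mathrm{t}}\times N_{\mathrm{f}}$ has orthonormal columns with the stated phase convention; we expect this to be a direct geometric-sum verification.
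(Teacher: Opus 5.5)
Your core argument is correct and is the paper's in substance. The three per-dimension Dirichlet-kernel computations are the paper's Lemmas 1--3, and the cross-path terms are removed by $\mathbb{E}\{\beta_{k,p}\beta_{k,q}^*\}=\sigma^2_{k,p}\delta_{pq}$, exactly as in its last step. The only difference is the order. The paper passes to the limit in the random amplitude $\mathcal{H}^{\mathrm{TB}}_k$ and then averages; you average first and then take limits of deterministic power factors. Your order is slightly cleaner:
the cross terms vanish exactly at finite size, the unit-modulus phases (including the $n_{\mathrm{T}}$ phase) become irrelevant, and the delay and Doppler factors are exact unit vectors for every $N_{\mathrm{c}},N_{\mathrm{t}}$, because the model puts $\tau_{k,p}$ and $\nu_{k,p}$ on the grid. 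Your orthonormality check of $\mathbf{W}^{\nu}_{N_{\mathrm{t}},N_{\mathrm{f}}}$ also goes through. One small slip: squaring the prefactor gives $(M_{\mathrm{c}}M_{\mathrm{r}}N_{\mathrm{c}}N_{\mathrm{t}})^{-1}$, so the normalization must be $\bar{\mathbf{g}}=\frac{1}{L}\mathbf{g}\odot\mathbf{g}^*$ with $L\in\{A,N_{\mathrm{c}},N_{\mathrm{t}}\}$, not $\frac{1}{\sqrt{L}}$. Your angle computation already uses $1/M_{\mathrm{c}}$.

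What would not go through is your treatment of non-integer peaks. A shrinking grid does not make $\breve{c}_{k,p}=M_{\mathrm{c}}u$, with $u=\frac{d_{\mathrm{r}}}{\lambda_c}\cos\theta_{k,p}+\frac12$ fixed, asymptotically an integer. In the normalized variable $j/M_{\mathrm{c}}$ the offset from the nearest grid point is $\epsilon/M_{\mathrm{c}}\to0$, but the main lobe also has width $1/M_{\mathrm{c}}$. So the leakage is governed by $\epsilon=\mathrm{dist}(M_{\mathrm{c}}u,\mathbb{Z})$ itself, which does not tend to zero; for irrational $u$ it is equidistributed on $[0,\frac12]$. Concretely, the total normalized power is $1$ by unitarity, and the peak bin carries $\frac{\sin^2(\pi\epsilon)}{M_{\mathrm{c}}^2\sin^2(\pi\epsilon/M_{\mathrm{c}})}\to\frac{\sin^2(\pi\epsilon)}{\pi^2\epsilon^2}$ of it. That is only $4/\pi^2\approx0.41$ at $\epsilon=\frac12$. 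Your own bound $1/(\pi^2\epsilon^2)$ is independent of $M_{\mathrm{c}}$, so nothing vanishes, and you cannot make the ``thought of as integers'' proviso precise this way.

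You have two options:
\begin{enumerate}
\item Take integrality of $\breve{c}_{k,p}$ and $\breve{r}_{k,p}$ as a hypothesis, as the statement and the paper do. Then the angle factors are also exact unit vectors and your proof is complete.
\item Prove the weaker statement that is actually true. Using $|\sin(\pi\delta/M_{\mathrm{c}})|\ge2|\delta|/M_{\mathrm{c}}$ for $|\delta|\le M_{\mathrm{c}}/2$, the power in a bin at cyclic distance $\delta$ from $\breve{c}_{k,p}$ is at most $1/(4\delta^2)$. Hence all but $O(1/K)$ of each path's power lies within $K$ bins of $\breve{c}_{k,p}$, uniformly in $M_{\mathrm{c}}$.
\end{enumerate}
The second option is concentration in normalized angle, which is also all that the paper's $\lim_{L\to\infty}S_L(x)=\delta(x)$ at fixed $x$ actually delivers.
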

\begin{IEEEproof}
	See Appendix \ref{app:Theorem 1}.
\end{IEEEproof}

Theorem \ref{theorem:1} demonstrates that as $M_{\mathrm{c}}$, $M_{\mathrm{r}}$, $N_{\mathrm{c}}$ and $N_{\mathrm{t}}$ approach infinity, for the TBF of the $k$-th UT, the channel energy of the $p$-th multipath becomes focused on the $(\breve{c}_{k,p}M_{\mathrm{r}}+\breve{r}_{k,p})$-th angle direction, the delay $\frac{\tau_{k,p}}{T_{\mathrm{s}}}$, and the $(N_{\mathrm{t}}\nu_{k,p} T_{\mathrm{sym}}\!+\!\frac{N_{\mathrm{f}}}{2})$-th OFDM symbol. In other words, $\left[\mathcal{F}_{k}\right]_{i,j,l}$ corresponds to the average channel power associated with the $i$-th DOA, the $j$-th TOA, and the $l$-th Doppler frequency. A non-zero element in $\mathcal{F}_{k}$ indicates the existence of a signal path arriving at the receiver with the corresponding DOA, TOA, and Doppler frequency. Conversely, $\left[\mathcal{F}_{k}\right]_{i,j,l}=0$ implies that no signal path reaches the receiver under those parameters. This theorem elucidates the physical significance of each element of the TBF, highlighting that TBF is a unique fingerprint linked to its location and environment. Fig. \ref{fig:TB_tensor} shows the 3D structure of TBF and its cross-sectional slices along each dimension. The structure of the fingerprint tensor relies on multipath information set $\mathfrak{S}_{k}=\{\sigma_{k,p},\ \theta_{k,p},\ \varphi_{k,p},\ \tau_{k,p},\ \nu_{k,p}|p=0,\cdots,P-1\}$. Due to the channel environment of each location is different, the multipath information sets of two UTs at diverse positions are also different, i.e., $\mathfrak{S}_{k}\neq\mathfrak{S}_{k'}$. Consequently, the corresponding TBFs will differ as well. Theorem \ref{theorem:1} establishes the discriminability of TBFs and illustrates that TBF can function as a unique identifier for UT locations. 
	
\begin{figure}[h!]
	\centering
	\includegraphics[width=3.45in]{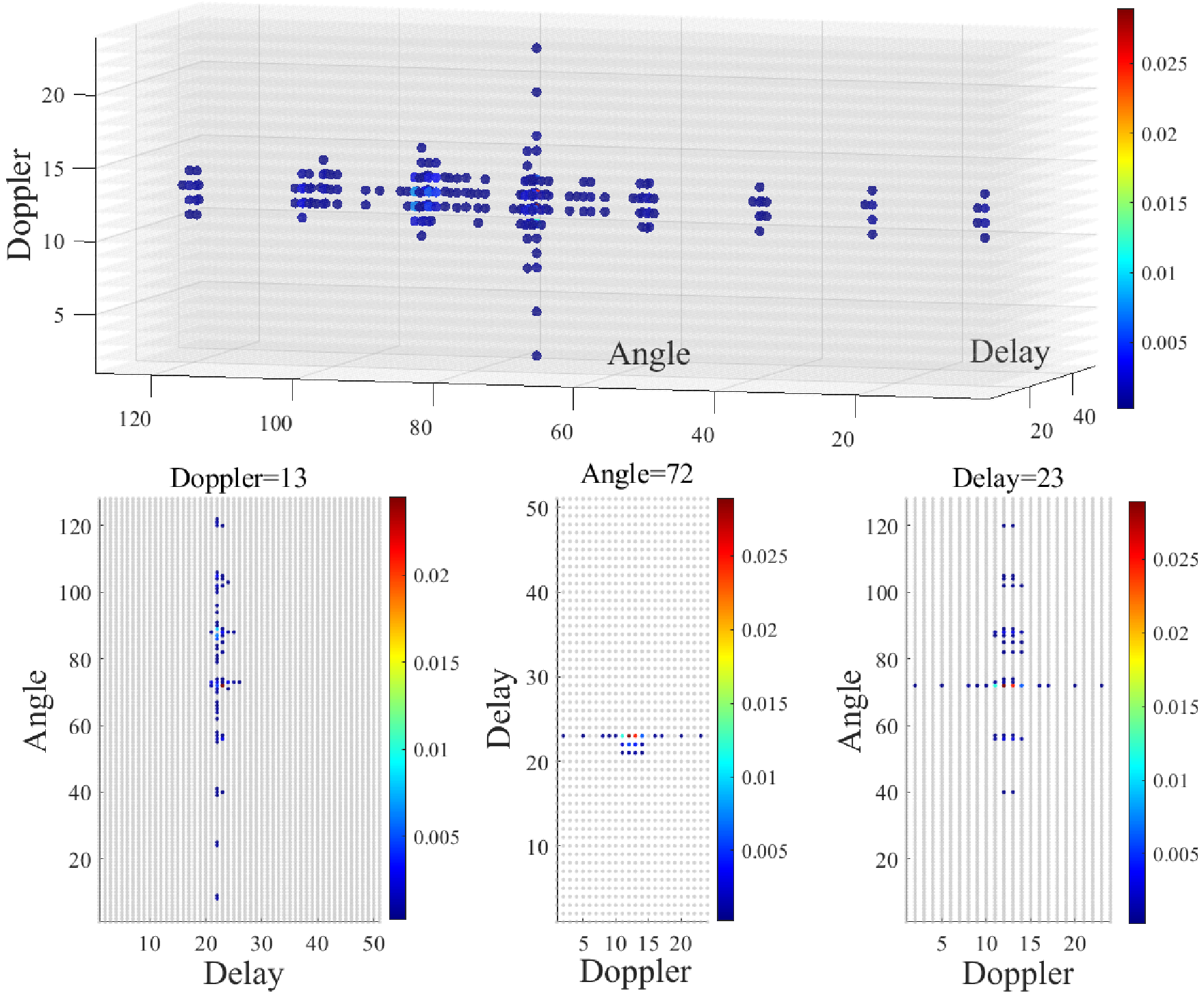}
	\caption{The structure of TBF and its slices in each dimension. $A=128$, $N_{\mathrm{g}}=50$ and $N_{\mathrm{f}}=24$. Values less than $2\times 10^{-4}$ are represented in light gray for visual clarity.}
	\label{fig:TB_tensor}
\end{figure}

Additionally, Theorem \ref{theorem:1} reveals the channel sparsity in the TB domain, which is consistent with \cite{LYXG2015,YZLY20251,JTXG2024}. As $M_{\mathrm{c}}$, $M_{\mathrm{r}}$, $N_{\mathrm{c}}$, and $N_{\mathrm{t}}$ approach infinity, channel energy concentrates solely on specific TB domain positions, indicating the sparsity of TBF. Numerical simulations show TBF maintains sparsity even with large finite dimensions.

We propose a massive MIMO-OFDM tensor channel model and treat stable statistical information $\mathcal{X}_{k}$ as SFTF. Hence, the SFTF encompasses the same information for each multipath component as $\mathcal{H}^{\mathrm{SFT}}_k$, making it a comprehensive and reliable fingerprint. In order to avoid the inherent defects of complex calculation, we transform it into the TB domain and obtain the lightweight TBF. Although Theorem \ref{theorem:1} demonstrates that the TBF retains essential statistical information like SFTF, equating their localization capabilities is cursory. In the next section, we will explore the collinearity of two fingerprints.

\subsection{Collinearity of SFTF and TBF}

The greater the difference in fingerprints at any different positions, the stronger the discriminative ability. Therefore, we define the collinearity of tensors,
\begin{equation}\label{eq:coll define}
	\Xi \left (\mathcal{F}_{k},\mathcal{F}_{k'}\right)=\frac{\text{Tr}\{\mathcal{F}_{k}\bullet\mathcal{F}_{k'}^*\}}{\left \| \mathcal{F}_{k} \right \|_\mathrm{F} \left \| \mathcal{F}_{k'} \right \|_\mathrm{F} },
\end{equation}
where $\left \| \cdot \right \|_\mathrm{F}$ denotes the Frobenius norm. Note the trace operation of tensors, for which we have the following explanation. For $\mathcal{A}\in\mathbb{C}^{I_0\times \cdots I_M\times J_0\times \cdots J_M}$, we refer to the elements $[\mathcal{A}]_{i_0,\cdots,i_M,j_0,\cdots,j_M}$ satisfying condition $i_0=j_0,\cdots,i_M=j_M$ as pseudo-diagonal elements and the trace of $\mathcal{A}$ is the sum of these pseudo-diagonal elements. In particular,
\begin{equation}\label{eq:tensor trace}
	\text{Tr}\{\mathcal{A}\}\triangleq\sum_{i_0=0}^{I_0-1}\cdots\sum_{i_M=0}^{I_M-1}\left[\mathcal{A}\right]_{i_0,\cdots,i_M,i_0,\cdots,i_M}.
\end{equation}
Collinearity measures the similarity of fingerprints at different positions, with lower collinearity indicating greater discrimination \cite{CWXY2021,GHCF2013}. Or rather, if the collinearity between fingerprints is lower, it is more favorable for localization, as it facilitates the discrimination of adjacent points and improves resolution. The following theorem explores the collinearity of SFTF and TBF.

\begin{theorem}\label{theorem:2}
	In the context of a massive MIMO-OFDM system, both SFTF and TBF exhibit equivalent collinearity as $M_{\mathrm{c}}$, $M_{\mathrm{r}}$, $N_{\mathrm{c}}$, and $N_{\mathrm{t}}$ approach infinity, indicating that their localization capabilities are identical. Specifically, when $M_{\mathrm{c}}$, $M_{\mathrm{r}}$, $N_{\mathrm{c}}$ and $N_{\mathrm{t}}$ are sufficiently large, $\mathcal{F}_{k}$ and $\mathcal{X}_{k}$ satisfy the following relation,
	\begin{equation}\label{eq:coll equal}
		\Xi\left(\mathcal{F}_{k},\mathcal{F}_{k'}\right)\approx\Xi\left(\mathcal{X}_{k},\mathcal{X}_{k'}\right).
	\end{equation}
\end{theorem}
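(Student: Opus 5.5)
The plan is to compute both collinearities in closed form in terms of the per-path quantities $\sigma^2_{k,p}$ and the steering vectors. Then, in the asymptotic regime of Theorem \ref{theorem:1}, both expressions should reduce to the same function of the multipath sets $\mathfrak{S}_k$ and $\mathfrak{S}_{k'}$.

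\textbf{Step 1: the SFTF side.} Since the path gains $\beta_{k,p}$ are independent and zero-mean, the cross terms in $\mathcal{X}_k$ vanish, giving
\begin{equation*}
\mathcal{X}_k=\sum_{p}\sigma^2_{k,p}\,\mathcal{V}_{k,p}\bullet\mathcal{V}_{k,p}^{*},\qquad \mathcal{V}_{k,p}\triangleq\mathbf{f}^{\phi}(\theta_{k,p},\varphi_{k,p})\bullet\mathbf{f}^{\tau}(\tau_{k,p})\bullet\mathbf{f}^{\nu}(\nu_{k,p}).
\end{equation*}
By (\ref{eq:tensor trace}), $\text{Tr}\{\mathcal{X}_k\bullet\mathcal{X}_{k'}^*\}$ is the Frobenius inner product of the two rank-structured tensors. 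Hence
\begin{equation*}
\text{Tr}\{\mathcal{X}_k\bullet\mathcal{X}_{k'}^*\}=\sum_{p,q}\sigma^2_{k,p}\sigma^2_{k',q}\,\bigl|\langle\mathcal{V}_{k,p},\mathcal{V}_{k',q}\rangle\bigr|^2 .
\end{equation*}
The inner product factorizes as a product of the angle, delay, and Doppler steering-vector inner products, each normalized by its length. For the UPA, it further factorizes into row and column Dirichlet kernels. The norm is $\|\mathcal{X}_k\|_\mathrm{F}^2=\sum_{p,p'}\sigma^2_{k,p}\sigma^2_{k,p'}|\langle\mathcal{V}_{k,p},\mathcal{V}_{k,p'}\rangle|^2$.

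\textbf{Step 2: the asymptotic limit.} As $M_{\mathrm{c}},M_{\mathrm{r}},N_{\mathrm{c}},N_{\mathrm{t}}\to\infty$, each normalized kernel tends to $1$ if the parameters coincide, that is, if they fall in the same angle/delay/Doppler bin defined by $\breve{c},\breve{r},\tau/T_{\mathrm{s}},N_{\mathrm{t}}\nu T_{\mathrm{sym}}+N_{\mathrm{f}}/2$, and to $0$ otherwise. This is the same Dirichlet-kernel argument used in Appendix \ref{app:Theorem 1}. After dividing by $(AN_{\mathrm{c}}N_{\mathrm{t}})^2$, which cancels in the ratio, we obtain
\begin{equation*}
\Xi(\mathcal{X}_k,\mathcal{X}_{k'})\to\frac{\sum_{p,q}\sigma^2_{k,p}\sigma^2_{k',q}\,\mathbb{1}[\mathbf{b}_{k,p}=\mathbf{b}_{k',q}]}{\sqrt{\sum_{p,p'}\sigma^2_{k,p}\sigma^2_{k,p'}\mathbb{1}[\mathbf{b}_{k,p}=\mathbf{b}_{k,p'}]}\sqrt{\sum_{q,q'}\sigma^2_{k',q}\sigma^2_{k',q'}\mathbb{1}[\mathbf{b}_{k',q}=\mathbf{b}_{k',q'}]}}.
\end{equation*}
Here $\mathbf{b}_{k,p}$ denotes the TB index triple of path $p$ of UT $k$.

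\textbf{Step 3: the TBF side.} Theorem \ref{theorem:1} lets us replace $\mathcal{F}_k$ by $\sum_p\sigma^2_{k,p}\mathbf{\Lambda}_{\mathbf{b}_{k,p}}$, the sum of unit tensors at the triples $\mathbf{b}_{k,p}$. The tensor $\mathcal{F}_k$ is real, so $\text{Tr}\{\mathcal{F}_k\bullet\mathcal{F}_{k'}^*\}=\langle\mathcal{F}_k,\mathcal{F}_{k'}\rangle=\sum_{p,q}\sigma^2_{k,p}\sigma^2_{k',q}\mathbb{1}[\mathbf{b}_{k,p}=\mathbf{b}_{k',q}]$. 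The norms $\|\mathcal{F}_k\|_\mathrm{F}^2$ are given by exactly the same double sums as above. The collinearity is a continuous function of its arguments away from zero, so the difference in (\ref{eq:theorem 1}) tending to $\mathbf{0}$ transfers to the ratio. This yields (\ref{eq:coll equal}).

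The main obstacle is Step 2. The normalized kernel products must be shown to converge to indicators uniformly enough to pass the limit through the finite sums. This needs care in two places. First, the Dirichlet kernel of the Doppler bins must match the index definition in $\mathbf{W}^{\nu}$, including the offsets $n_{\mathrm{T}}$ and $N_{\mathrm{f}}/2$. Second, the squared kernel (not the kernel itself) appears on the SFTF side. I would handle both by the integer-bin assumption stated in Theorem \ref{theorem:1}. Under that assumption, same-bin kernels equal the full length exactly, and different-bin kernels are $O(1)$ against a length of order $N$, so each normalized squared term is $O(1/N^2)\to0$.
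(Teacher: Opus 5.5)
Your proposal is correct under the same integer-bin idealization the paper uses, but it takes a genuinely different route.

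\textbf{Your route.} You work in path coordinates. You expand $\mathcal{X}_k$ as a sum of rank-one covariances, so the numerator $\sum_{p,q}\sigma^2_{k,p}\sigma^2_{k',q}|\langle\mathcal{V}_{k,p},\mathcal{V}_{k',q}\rangle|^2$ becomes a product of squared Dirichlet kernels. After normalization, this collapses to the same power-weighted bin-overlap count that Theorem~\ref{theorem:1} gives for $\langle\mathcal{F}_k,\mathcal{F}_{k'}\rangle$.

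\textbf{The paper's route.} The paper never evaluates a cross-UT kernel. It introduces the TB-domain covariance $\mathcal{Y}_k$, whose pseudo-diagonal equals $\mathcal{F}_k$ and whose other entries vanish asymptotically. It then completes the truncated delay/Doppler transforms to square matrices. This makes $\ddot{\mathcal{Y}}_k$ a unitary image of $\mathcal{X}_k/(M_{\mathrm{c}}M_{\mathrm{r}}N_{\mathrm{c}}N_{\mathrm{t}})$, so Lemma~\ref{lemma 5} preserves every Frobenius inner product. Finally, Lemma~\ref{lemma 4} shows the completion is asymptotically just zero-padding of $\mathcal{H}^{\mathrm{TB}}_k$.

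\textbf{What each buys.} The isometry argument explains structurally why the collinearities agree. The TB map is an orthonormal change of basis, and the only error sources are energy outside the retained $N_{\mathrm{g}}\times N_{\mathrm{f}}$ window and off-diagonal TB covariance. Your computation instead gives the explicit limiting value of $\Xi$: a power-weighted overlap of the two UTs' occupied TB bins. It also needs no unitary completion, which is the delicate step in the paper. As written, $\mathbf{W}^{\nu}_{N_{\mathrm{t}}}$ in (\ref{eq:dopp extension trans mat}) has column frequencies spaced by $1/(N_{\mathrm{s}}N_{\mathrm{t}})$, so it is not unitary for $N_{\mathrm{s}}>1$. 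Completing $\mathbf{W}^{\nu}_{N_{\mathrm{t}},N_{\mathrm{f}}}$ with the remaining DFT columns would repair this.

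\textbf{A refinement of your last paragraph.} On the grid, $S_N(m/N)=0$ for every nonzero integer $m$ with $|m|<N$. So the off-bin terms vanish identically, not merely as $O(1/N^2)$. Since $\mathcal{F}_k$ is also exactly a sum of unit tensors on the grid, your computation in fact gives exact equality of the two collinearities at finite sizes. Conversely, the integer-bin assumption is doing real work. With $T_{\mathrm{s}}$ fixed, the delay kernel $S_{N_{\mathrm{c}}}(\delta/N_{\mathrm{c}})$ tends to $\sin(\pi\delta)/(\pi\delta)$, not to $0$, for non-integer $\delta$. The paper's Lemma~\ref{lemma 2} rests on the same assumption.
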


\begin{IEEEproof}
	See Appendix \ref{app:Theorem 2}.
\end{IEEEproof}

Theorem \ref{theorem:2} shows the collinearities of SFTF and TBF becomes asymptotically equivalent as $M_{\mathrm{c}}$, $M_{\mathrm{r}}$, $N_{\mathrm{c}}$, $N_{\mathrm{t}}$ approach infinity, and are close when $M_{\mathrm{c}}$, $M_{\mathrm{r}}$, $N_{\mathrm{c}}$ and $N_{\mathrm{t}}$ are large. Building on this foundation and referencing Theorem \ref{theorem:1}, we can conclude that both possess equivalent location information and exhibit the same capability for location discriminability. However, TBF offers advantages such as smaller size and excellent sparsity in practical applications. Consequently, this paper selects TBF as the input for the network model.

\section{Localization And Orientation Awareness Network}\label{sec:LOA architecture} 

Drawing upon the comprehensive environmental information contained in the TBF, we propose the LOA-Net to estimate both the localization and motion direction of a UT. Given practical constraints, we select the Angle-Delay domain information for localization, and the Doppler domain information is exclusively used to estimate the motion direction. Accordingly, the localization functionality of LOA-Net is realized via MaskDETR-Reg, and the motion direction estimation is performed by Fusion-TDC. Fig. \ref{fig:LOA Net} shows the architecture of the LOA-Net. In this section, we introduce each component.

\begin{figure*}[!t]
	\centering
	\includegraphics[width=6.5in]{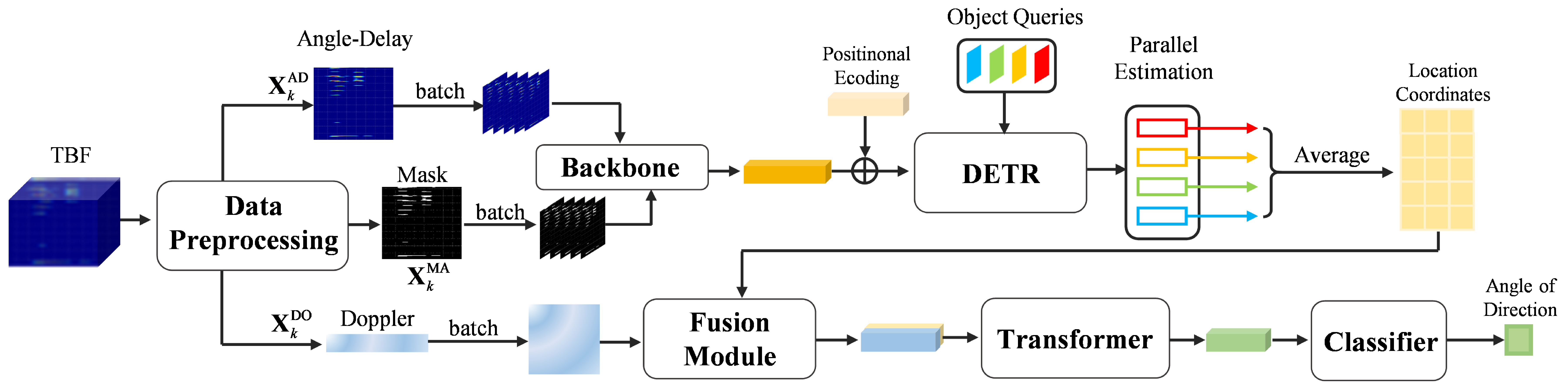}
	\caption{The network architecture of the proposed LOA-Net.}
	\label{fig:LOA Net}
\end{figure*}

\subsection{Data Preprocessing}

As illustrated in Fig. \ref{fig:LOA Net}, $\mathcal{F}_{k}$ separates the Angle-Delay and Doppler domain information after data preprocessing, and simultaneously generates a mask. The Angle-Delay domain data and mask are aggregated into batches and fed as input to MaskDETR-Reg, while the Doppler domain information is similarly batched and processed by Fusion-TDC. 

Since the duration of a frame is very short, we assume that the UT position remains constant for each Doppler frequency in $\mathcal{F}_{k}$, implying $\left[\mathcal{F}_{k}\right]_{:,:,i}\approx\beta^{\mathrm{DO}}_{i,j}\left[\mathcal{F}_{k}\right]_{:,:,j}$. Here $\beta^{\mathrm{DO}}_{i,j}$ is the coefficient related to the Doppler frequencies of the $i$th and $j$th slots. This is also verified numerically by subsequent simulation. Defining the Angle-Delay domain data of the $k$th UT as $\mathbf{X}^{\mathrm{AD}}_{k}\in\mathbb{R}^{A\times N_{\mathrm{g}}}$. By aggregating the third dimension of $\mathcal{F}_{k}$ and normalizing, its Angle-Delay domain information can be extracted,
\begin{equation}\label{eq:AD input}
	\mathbf{X}^{\mathrm{AD}}_{k}=\frac{\sum_{l=0}^{N_{\mathrm{f}}-1}\left[\mathcal{F}_{k}\right]_{:,:,l}}{\sum_{i=0}^{A-1}\sum_{j=0}^{N_{\mathrm{g}}-1}\sum_{l=0}^{N_{\mathrm{f}}-1}\left[\mathcal{F}_{k}\right]_{i,j,l}}.
\end{equation}
Given that each $\left[\mathcal{F}_{k}\right]_{:,:,l}$ shares the extremely similar sparsity pattern, $\sum_{l=0}^{N_{\mathrm{f}}-1}\left[\mathcal{F}_{k}\right]_{:,:,l}$ preserves the Angle-Delay domain information across all Doppler frequencies while enabling dimensionality reduction. Here, $\mathbf{X}^{\mathrm{AD}}_{k}$ has undergone normalization and can be directly input into the MaskDETR-Reg after batch processing.

In computer vision, masks are commonly employed to restrict the focus of the network, thereby mitigating the impact of irrelevant regions on the output. This approach aligns naturally with the attention mechanism in Transformer \cite{HWCX2022}. Defining the mask of the $k$th UT as $\mathbf{X}^{\mathrm{MA}}_{k}\in\mathbb{R}^{A\times N_{\mathrm{g}}}$, we have
\begin{equation}\label{eq:mask input}
	\left[\mathbf{X}^{\mathrm{MA}}_{k}\right]_{i,j}=\left\{\begin{matrix}
		1,& \left[ \mathbf{X}^{\mathrm{AD}}_{k} \right]_{i,j}\ge \gamma \\
		0,& \mathrm{else} 
	\end{matrix}\right.,
\end{equation}
where $\gamma\in [0,1]$ is the threshold parameter, and a larger $\gamma$ means a more restricted focal area.

Similar to the generation of $\mathbf{X}^{\mathrm{AD}}_{k}$, the Doppler domain information $\mathbf{X}^{\mathrm{DO}}_{k}\in\mathbb{R}^{N_{\mathrm{f}}}$ can be extracted by aggregating the first and second dimensions of $\mathcal{F}_{k}$ and then normalizing,
\begin{equation}\label{eq:DO input}
	\mathbf{X}^{\mathrm{DO}}_{k}=\frac{\sum_{i=0}^{A-1}\sum_{j=0}^{N_{\mathrm{g}}-1}\left[\mathcal{F}_{k}\right]_{i,j,:}}{\sum_{i=0}^{A-1}\sum_{j=0}^{N_{\mathrm{g}}-1}\sum_{l=0}^{N_{\mathrm{f}}-1}\left[\mathcal{F}_{k}\right]_{i,j,l}}.
\end{equation}

\subsection{Mask-Augmented DETR for Regression}

According to Theorem \ref{theorem:1}, a bijective correspondence exists between $\mathbf{X}^{\mathrm{AD}}_{k}$ and UT position $\mathbf{Y}^{\mathrm{AD}}_{k}=(x,y,z)\in\mathbb{R}^{3}$. From the perspective of computer vision, $\mathbf{X}^{\mathrm{AD}}_{k}$ can be represented as a single-channel grayscale image, while the estimated position $\hat{\mathbf{Y}}^{\mathrm{AD}}_{k}=(\hat{x},\hat{y},\hat{z})$ constitutes a continuous-valued vector. This formulation naturally frames the localization problem as an end-to-end regression task. The Transformer architecture offers significant advantages over traditional CNNs by overcoming their limited receptive fields \cite{Transformer}. Through its self-attention mechanism and positional encoding, Transformer effectively models relationships among multiple fingerprint features while capturing long-range dependencies—particularly advantageous for processing high-dimensional fingerprints like TBF. DETR adds a multi-query design to the classical transformer, which enables position information extraction from multiple perspectives \cite{DETR}. Based on this, we propose MaskDETR-Reg for solving regression tasks. Formally, the network model can be viewed as a nonlinear function $f(\cdot)$, that is, $\hat{\mathbf{Y}}^{\mathrm{AD}}_{k}=f(\mathbf{X}^{\mathrm{AD}}_{k})$. The detailed architecture is shown in Fig. \ref{fig:MaskDETR}.

\begin{figure*}[!t]
	\centering
	\includegraphics[width=6in]{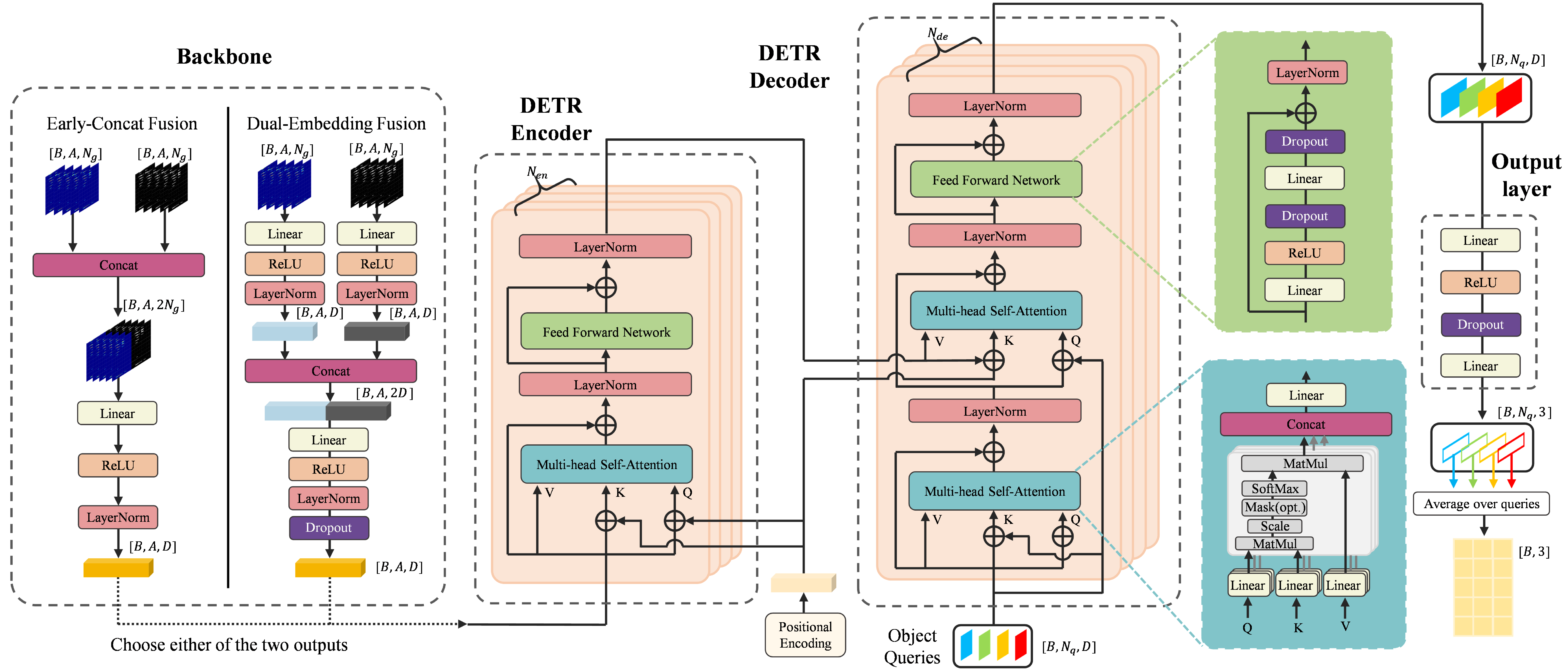}
	\caption{The network architecture of the proposed MaskDETR-Reg.}
	\label{fig:MaskDETR}
\end{figure*}

The proposed model employs a composite loss function $\mathcal{L}_{\mathrm{reg}}$ that combines the localization mean square error (MSE) error and the $L_2$ regularization term to optimize $\hat{\mathbf{Y}}^{\mathrm{AD}}_{k}$. The total loss function is expressed as
\begin{equation}\label{eq:loss reg}
	\mathcal{L}_{\mathrm{reg}}(\boldsymbol{\Theta})=\frac{1}{3N_{\mathrm{tra}}}\sum_{k=1}^{N_{\mathrm{tra}}}\left\| \hat{\mathbf{Y}}^{\mathrm{AD}}_{k}-\mathbf{Y}^{\mathrm{AD}}_{k} \right\|_2^2 + \frac{\lambda}{2}\boldsymbol{\Theta}^T\boldsymbol{\Theta},
\end{equation}
where $N_{\mathrm{tra}}$ is the number of training samples, $\left\|\cdot\right\|_2$ denotes the $\ell_2$-norm, $\frac{\lambda}{2}\boldsymbol{\Theta}^T\boldsymbol{\Theta}$ is the $L_2$ regularization term to prevent overfitting, $\lambda$ is the weight factor of the regularization, and $\boldsymbol{\Theta}$ is the trainable parameters of the network.

The angle-delay domain information and the mask, with batch size $B$, are first processed by the backbone for feature extraction. We propose two fusion strategies: Early-Concat fusion at the input level and Dual-Embedding fusion at the feature level. These strategies effectively combine position information with focal region features. Notably, the mask processing does not discard near-zero values, which is different from \cite{CWXY2021}. In Early-Concat fusion, the input data and its corresponding mask are concatenated along the delay domain dimension. While input level ensures rapid convergence, it lacks deep feature interaction. For Dual-Embedding fusion, the inputs are separately embedded, concatenated, and compressed. The detailed structure is shown in Fig. \ref{fig:MaskDETR} and $D$ denotes the Transformer hidden dimension. The architectures of EC-MaskDETR and DE-MaskDETR differ in their backbone, and the other components remain identical. Both fusion methods have their respective advantages. The EC-MaskDETR is simpler and converges faster, while DE-MaskDETR exhibits stronger feature extraction capabilities and delivers higher localization accuracy. In practical applications, the choice of model can be weighed based on the characteristics of the specific scenario.

The features from the backbone are fed into the DETR encoder and decoder (detailed in \cite{DETR}). In Fig. \ref{fig:MaskDETR}, $N_{\mathrm{en}}$ and $N_{\mathrm{de}}$ denote the numbers of encoder and decoder layers, respectively. $N_{\mathrm{q}}$ denotes the number of object queries. Multi-head Self-Attention and Feed Forward network are the architectures in classical Transformer \cite{Transformer}. The integration of positional encoding and the self-attention mechanism enables the model to capture the inherent structure within the non-sparse regions of the TBF and utilize this information in a focused manner. In our network, the object queries represent $N_{\mathrm{q}}$ parallel estimation results focusing on different input features. The $D$-dimensional features are mapped to xyz coordinates through the output layer, and the average of $N_{\mathrm{q}}$ coordinates is taken as the final result to improve the robustness of estimation.

\subsection{Fusion-Enhanced Transformer for Direction Classification}\label{sec:Fusion TDC}

The Doppler frequency is influenced by multiple factors, and the Doppler domain information in TBF varies depending on position, speed, and motion direction. Using the coordinates estimated by MaskDETR-Reg as prior information, we derive the UT motion direction from $\mathbf{X}^{\mathrm{DO}}_{k}$. Given the constraints imposed by the frame structure in communication systems and considerations of training cost, simultaneously estimating the speed and motion direction of UTs from Doppler power spectra with limited resolution is highly challenging. Therefore, we restrict all UTs to the same speed, simplify the problem as a multi-class classification task, and propose Fusion-TDC. The angular range $[0,2\pi)$ is divided into 16 classes: $[\frac{-\pi}{16},\frac{\pi}{16}),[\frac{\pi}{16},\frac{3\pi}{16}),\cdots,[\frac{29\pi}{16},\frac{31\pi}{16})$. The direction of $\mathbf{X}^{\mathrm{DO}}_{k}$ is $y^{\mathrm{DO}}_{k}$, and its One-Hot encoding is $\mathbf{Y}^{\mathrm{DO}}_{k}\in\mathbb{R}^{16}$. Fusion-TDC is viewed as a nonlinear function $g(\cdot)$, such that $\hat{y}^{\mathrm{DO}}_{k}=g(\mathbf{X}^{\mathrm{DO}}_{k})$. The detailed architecture is shown in Fig. \ref{fig:Transformer_class}.

\begin{figure}[!t]
	\centering
	\includegraphics[width=3in]{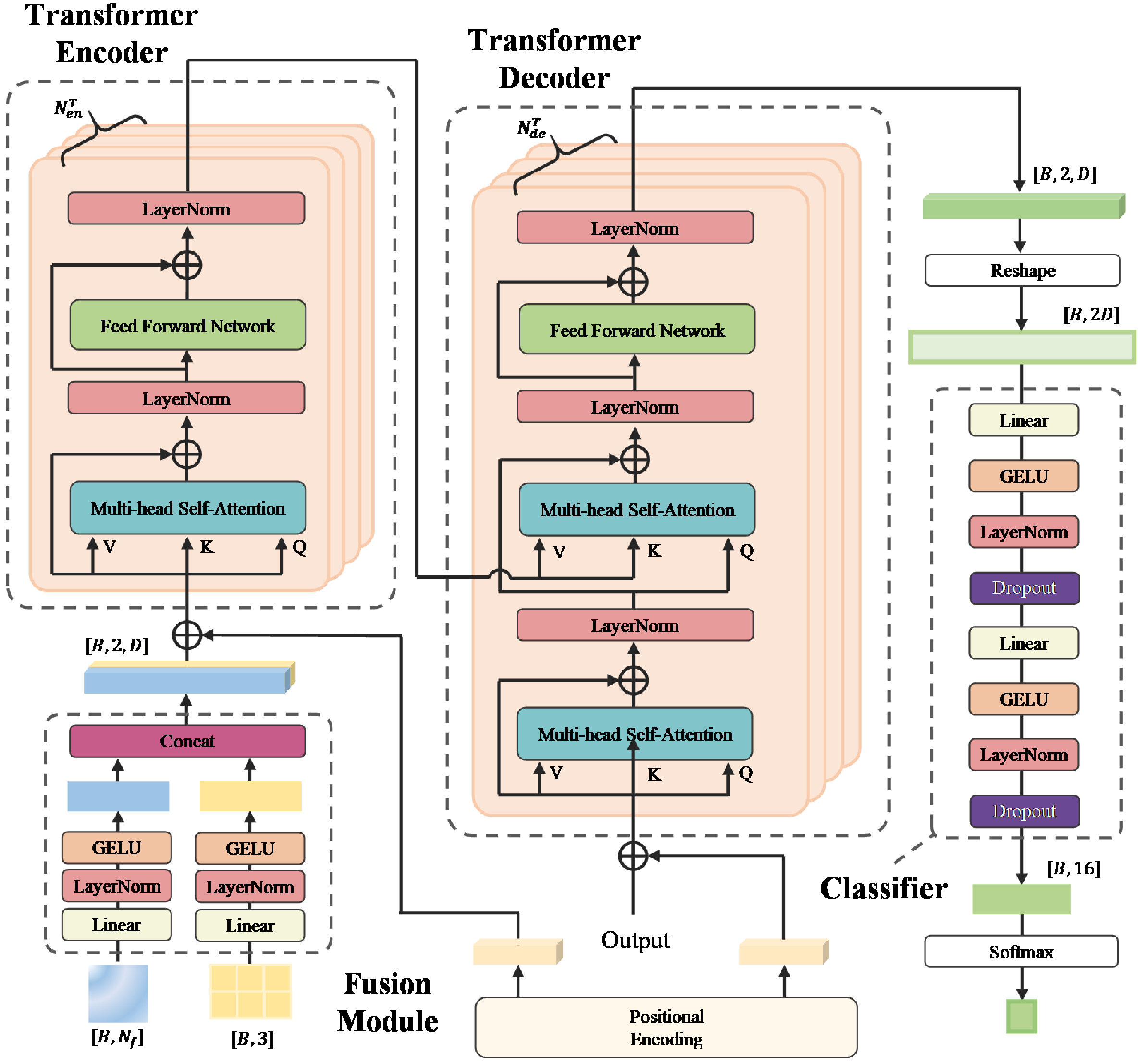}
	\caption{The network architecture of the proposed Fusion-TDC.}
	\label{fig:Transformer_class}
\end{figure}

The loss function $\mathcal{L}_{\mathrm{cla}}$ combines cross-entropy loss and label smooth loss, denoted as
\begin{equation}\label{eq:loss cla}
	\mathcal{L}_{\mathrm{cla}}\!=\!\frac{-1}{N_{\mathrm{tra}}}\!\!\sum_{k=1}^{N_{\mathrm{tra}}}\!\mu\!\overbrace{(\mathbf{Y}^{\mathrm{DO}}_{k})^T\!\log (\hat{\mathbf{Y}}^{\mathrm{DO}}_{k}\!)}^{\mathrm{cross-entropy\ loss}}\!\!+\!(\!1-\mu\!)\!\overbrace{(\tilde{\mathbf{Y}}^{\mathrm{DO}}_{k}\!)^T\!\!\log (\hat{\mathbf{Y}}^{\mathrm{DO}}_{k})}^{\mathrm{label\ smooth\ loss}},
\end{equation}
where 
\begin{equation}
	\left[\tilde{\mathbf{Y}}^{\mathrm{DO}}_{k}\right]_i=\begin{cases}
		1-\epsilon,& i \text{ is the true class}\\
		\frac{\epsilon}{C-1}, & \text{else}
	\end{cases},
\end{equation}
$C$ is the total number of classes, $\epsilon\in[0,1]$ is the smoothing factor, and $\mu$ is the weight coefficient. Properly adjusting $\epsilon$ and $\mu$ can mitigate overfitting.

In the fusion module, both inputs are processed by a linear layer, layer normalization, and GELU activation for feature projection. Then they are concatenated into $B\times2\times D$ tensor. The Transformer encoder and decoder are classical (detailed in \cite{Transformer}), where $N_{\mathrm{en}}^T$ and $N_{\mathrm{de}}^T$ in Fig. \ref{fig:Transformer_class} denote the numbers of encoder and decoder layers, respectively. Finally, the decoded tensor is reshaped into a matrix, which is subsequently fed through a classifier and Softmax function to generate the classification output.

\section{Simulation Results}\label{sec:simulation results}

In this section, we introduce the simulation setup and then evaluate the performance of LOA-Net in both localization and orientation awareness. To enhance the persuasiveness of the simulation, we chose a more challenging NLOS environment.

\subsection{Simulation Setup}

The simulation dataset is generated by the professional radio channel generation software QuaDRiGa \cite{quadriga}. After obtaining the perfect CSI through QuaDRiGa, additive white Gaussian noise when SNR$=$20 dB is superimposed to simulate noise and channel estimation errors. We configure the OFDM parameters in accordance with the 3GPP LTE \cite{3gpp36.211} specifications and generate the TBF for an indoor NLOS scenario, as defined in 3GPP 38.901 \cite{3gpp38.901}. Furthermore, the proposed method is applicable to any multi-subcarrier OFDM parameter configuration. The essential simulation parameters are summarized in Table \ref{tb:parameters}.

\newcolumntype{L}{>{\hspace*{-\tabcolsep}}l}
\newcolumntype{R}{c<{\hspace*{-\tabcolsep}}}
\definecolor{lightblue}{rgb}{0.93,0.95,1.0}
\definecolor{lightgreen}{rgb}{0.95,1.0,0.93}
\begin{table}[htbp]
	\captionsetup{font=footnotesize}
	\caption{Parameter Settings}\label{tb:parameters}
	\centering
	\ra{1.5}
	\scriptsize
	\begin{tabular}{LR}
		\toprule
		Parameter &  Value \\
		\rowcolor{lightblue}
		\midrule
		Number of BS antennas $A$ $(M_{\mathrm{r}}\times M_{\mathrm{c}})$ & 128 (16$\times$8)\\
		Carrier frequency &5.8 GHz\\
		\rowcolor{lightblue}
		Subcarrier interval $\Delta f$ &15 kHz  \\
		Subcarrier number $N_{\mathrm{c}}$&2048 \\
		\rowcolor{lightblue}
		CP length $N_{\mathrm{g}}$ & 144 \\
		Number of slot in each frame $N_{\mathrm{f}}$  & 8 \\
		\rowcolor{lightblue}
		Number of symbols in each slot $N_{\mathrm{s}}$ & 14\\
		Velocity of UTs & 5 km/h\\
		\bottomrule
	\end{tabular}
\end{table}

The BS with a height of 25 m is located in the middle of the cell, and the antenna plane is perpendicular to the ground. The heights of UTs are set to 1.5 m, 4.5 m, and 7.5 m to simulate floors; thus, all UTs are located in the far-field area of the BS. Considering the actual application scenarios such as hospitals, shopping malls, and museum lobbies, the sampling area is set at 40 m$\times$40 m. The sampling area is centered on the BS, and the TBFs are generated at an interval of 1 m in each floor. For the regression task, the training set comprises 5043 TBFs (41$\times$41 per floor) with randomized motion directions. The test set consists of 2100 TBFs (700 per floor) randomly distributed across the sampling area, following a 7:3 training-to-test ratio. In the multi-classification task, the training set includes TBFs from 16 distinct orientations, sampled at 5043 predefined positions. Given the substantial size of the training set (5043$\times$16 samples), the test set generates 19840 TBFs with random positions and orientations at a training-to-test ratio of less than 7:2. We train MaskDETR-Reg and Fusion-TDC separately, where the coordinate input of Fusion-TDC is replaced by the true label. The training regimen incorporates an adaptive learning rate schedule coupled with an early stopping policy, triggered by a patience of 100 epochs without improvement in the validation loss. Other hyperparameters are listed below.

\newcolumntype{L}{>{\hspace*{-\tabcolsep}}l}
\newcolumntype{R}{c<{\hspace*{-\tabcolsep}}}
\newcolumntype{C}{>{\centering\arraybackslash}p{2cm}}
\definecolor{lightblue}{rgb}{0.93,0.95,1.0}
\definecolor{lightgreen}{rgb}{0.95,1.0,0.93}
\begin{table}[htbp]
	\captionsetup{font=footnotesize}
	\caption{Training Hyperparameter Settings}\label{tb:hyperparameters}
	\centering
	\ra{1.5}
	\scriptsize
	\begin{tabular}{LCCR}
		\toprule
		Parameter & DE-MaskDETR & EC-MaskDETR & Fusion-TDC \\
		\rowcolor{lightblue}
		\midrule
		Learning Rate & $1\times 10^{-4}$ & $5\times 10^{-5}$ & $1\times 10^{-4}$\\
		Optimizer & Adam & Adam & AdamW \\
		\rowcolor{lightblue}
		Batch Size & 8 & 8 & 16\\
		Hidden Dim & 256 & 256 & 256 \\
		\rowcolor{lightblue}
		Number of Heads & 4 & 4 & 8 \\
		Encoder Layers & 4 & 6 & 4\\
		\rowcolor{lightblue}
		Decoder Layers & 2 & 2 & 2 \\
		Object Queries & 4 & 4 & N/A \\
		\bottomrule
	\end{tabular}
\end{table}

The TBFs of the training points are generated and saved using MATLAB R2023a, and the training and testing are processed using Pytorch 2.6. Our simulation is carried out on a workstation equipped with one 12th Gen Intel Core i7-12700 CPU and one NVIDIA GeForce RTX 3060 GPU. 

\subsection{Localization Performance}

To evaluate the performance of the proposed EC-MaskDETR and DE-MaskDETR, we adopt the CNN method from \cite{CWXY2021} and WKNN based on the Euclidean distance as the benchmarks. The localization error is computed as the $\ell_2$-norm between the estimated value and the ground-truth label. For the two control methods, 2D CNN and 3D CNN, only part of the convolution kernel sizes in the convolution refinement module are adjusted to fit the input size. The remaining hyperparameters are adjusted to ensure that each network converges well.

During the training phase, Fig. \ref{fig:loss} illustrates the loss curves of each network. The curves exhibit slight fluctuations and maintain a smooth descending trend throughout the training process. EC-MaskDETR achieves faster convergence than DE-MaskDETR due to the backbone. WKNN is a non-parametric learner and therefore has no loss convergence curve.

\begin{figure}[!h]
	\centering
	\includegraphics[width=3.7in]{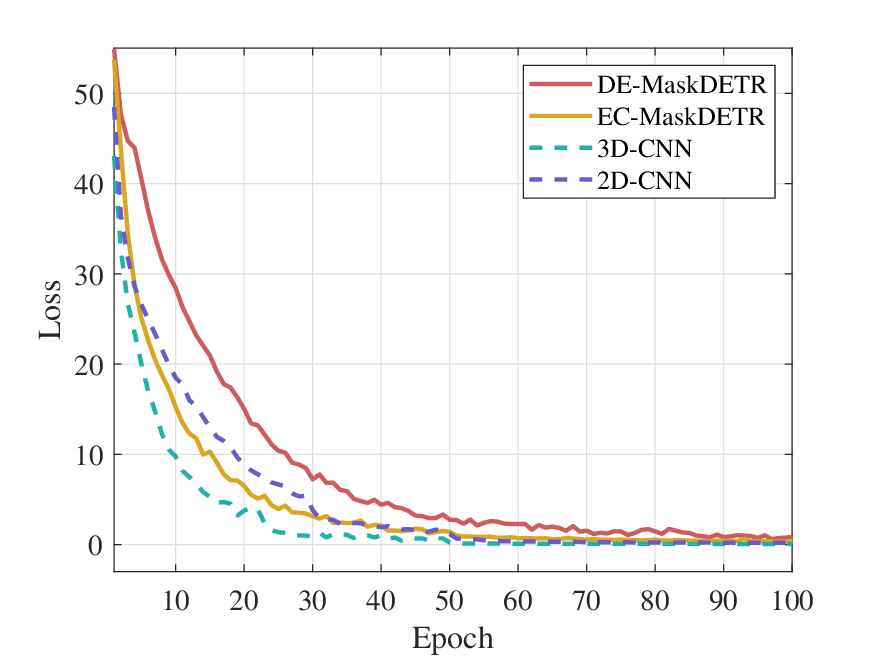}
	\caption{Loss function values for different models.}
	\label{fig:loss}
\end{figure}
\begin{figure}[!t]
	\centering
	\includegraphics[width=3.7in]{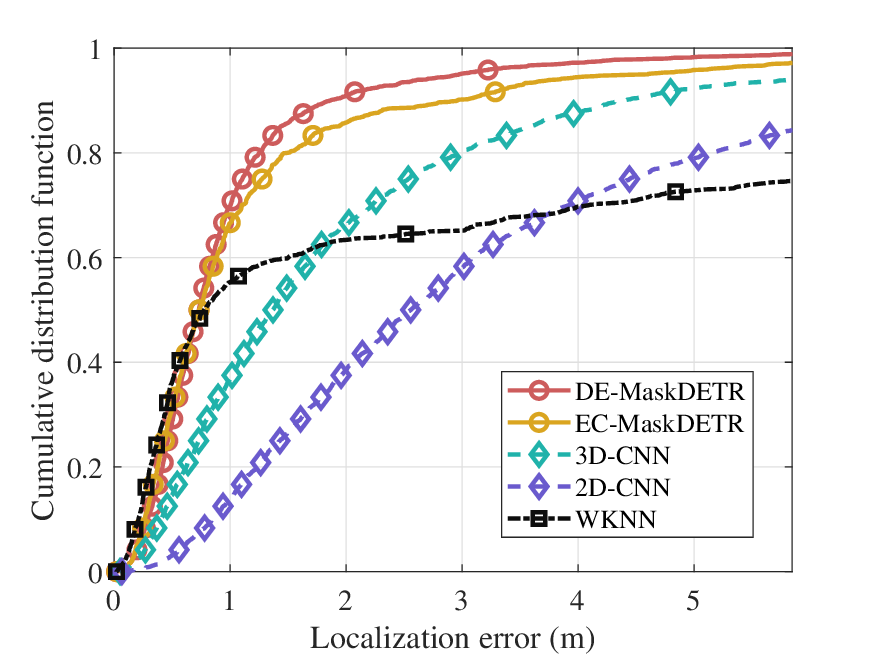}
	\caption{Cumulative distribution functions of localization errors for different localization methods.}
	\label{fig:err}
\end{figure}

Fig. \ref{fig:err} illustrates the cumulative distribution function (CDF) of the localization error of the test set. The DETR-based networks consistently outperform CNN-based networks in localization accuracy. Due to the difficulty of WKNN in establishing the high-dimensional and nonlinear relationship between TBF and position, its localization performance is unsatisfactory. From the results, it is evident that while some of the data estimates are relatively accurate, the majority exhibit significant errors, with the  indicating substantial overall deviation. DE-MaskDETR achieves the highest precision, with a mean error of 1.47 m and 90$\%$ of errors below 1.8 m. EC-MaskDETR exhibits lower performance, yielding an average error of 1.70 m. These results demonstrate that a feature-level backbone enables deeper feature extraction.

Table \ref{tb:Distance factor} presents the average localization accuracy of each method across different distance ranges. In the near-BS region (0-5 m), the scarcity of multipath signals leads to low fingerprint distinguishability, resulting in relatively higher localization errors. All fingerprint-based localization methods exhibit stable performance at greater distances.

\newcolumntype{L}{>{\hspace*{-\tabcolsep}}l}
\newcolumntype{R}{c<{\hspace*{-\tabcolsep}}}
\newcolumntype{C}{>{\centering\arraybackslash}p{0.7cm}}   
\newcolumntype{c}{>{\centering\arraybackslash}p{1.5cm}} 
\definecolor{lightblue}{rgb}{0.93,0.95,1.0}
\definecolor{lightgreen}{rgb}{0.95,1.0,0.93}
\begin{table}[htbp]
	\captionsetup{font=footnotesize}
	\caption{The Average Localization Error within Different Distance}\label{tb:Distance factor}
	\centering
	\ra{1.5}
	\scriptsize
	\begin{tabular}{LCCCcc}
		\toprule
		Distance (m) & WKNN & $\text{2D-CNN}$ & $\text{3D-CNN}$ & $\textbf{DE-MaskDETR}$ & $\textbf{EC-MaskDETR}$\\
		\rowcolor{lightblue}
		\midrule
		0-5 & 5.41 & 4.42 & 2.95& 1.92& 2.06\\
		5-10 & 4.58 & 3.56 & 2.31& 1.59& 1.6\\
		\rowcolor{lightblue}
		10-15 & 4.28 & 3.78 & 2.28& 1.25& 1.71\\
		15-20 & 4.49 & 3.71 & 2.43& 1.48& 1.71\\
		\bottomrule
	\end{tabular}
\end{table}

Table \ref{tb:SNR positioning} presents the average localization error of DE-MaskDETR under different SNR values. The results indicate that the error decreases with increasing SNR and gradually stabilizes at high SNR, demonstrating the robustness of the proposed method to noise.

\newcolumntype{L}{>{\hspace*{-\tabcolsep}}l}
\newcolumntype{R}{r<{\hspace*{-\tabcolsep}}}
\newcolumntype{C}{>{\centering\arraybackslash}p{0.3cm}}
\definecolor{lightblue}{rgb}{0.93,0.95,1.0}
\definecolor{lightgreen}{rgb}{0.95,1.0,0.93}
\begin{table}[htbp]
	\captionsetup{font=footnotesize}
	\caption{Average Localization Error under Different SNR Values}\label{tb:SNR positioning}
	\centering
	\ra{1.5}
	\scriptsize
	\begin{tabular}{LCCCCR}
		\toprule
		SNR (dB) & 0 & 5 & 10 & 15 & 20\\
		\rowcolor{lightblue}
		\midrule
		Average Localization Error (m) & 1.97 & 1.84 & 1.72& 1.63& 1.47\\
		\bottomrule
	\end{tabular}
\end{table}

\subsection{Orientation Awareness Performance}

Fig. \ref{fig:confusion mat} presents the confusion matrix derived from the test results at 5 km/h, where each cell displays the sample count (in parentheses) along with the corresponding recall rate (the proportion of true positives correctly identified). The diagonal elements of the confusion matrix exhibit significantly higher recall rates compared to the off-diagonal elements, demonstrating a strong classification accuracy. Specifically, the number of correctly classified samples in the test set is 14724, and the accuracy is 74.2$\%$. Notably, among the misclassified cases, those adjacent to the diagonal generally show higher recall rates, indicating that most of the misestimates are also close to the correct classification.

\begin{figure}[!t]
	\centering
	\includegraphics[width=3.6in]{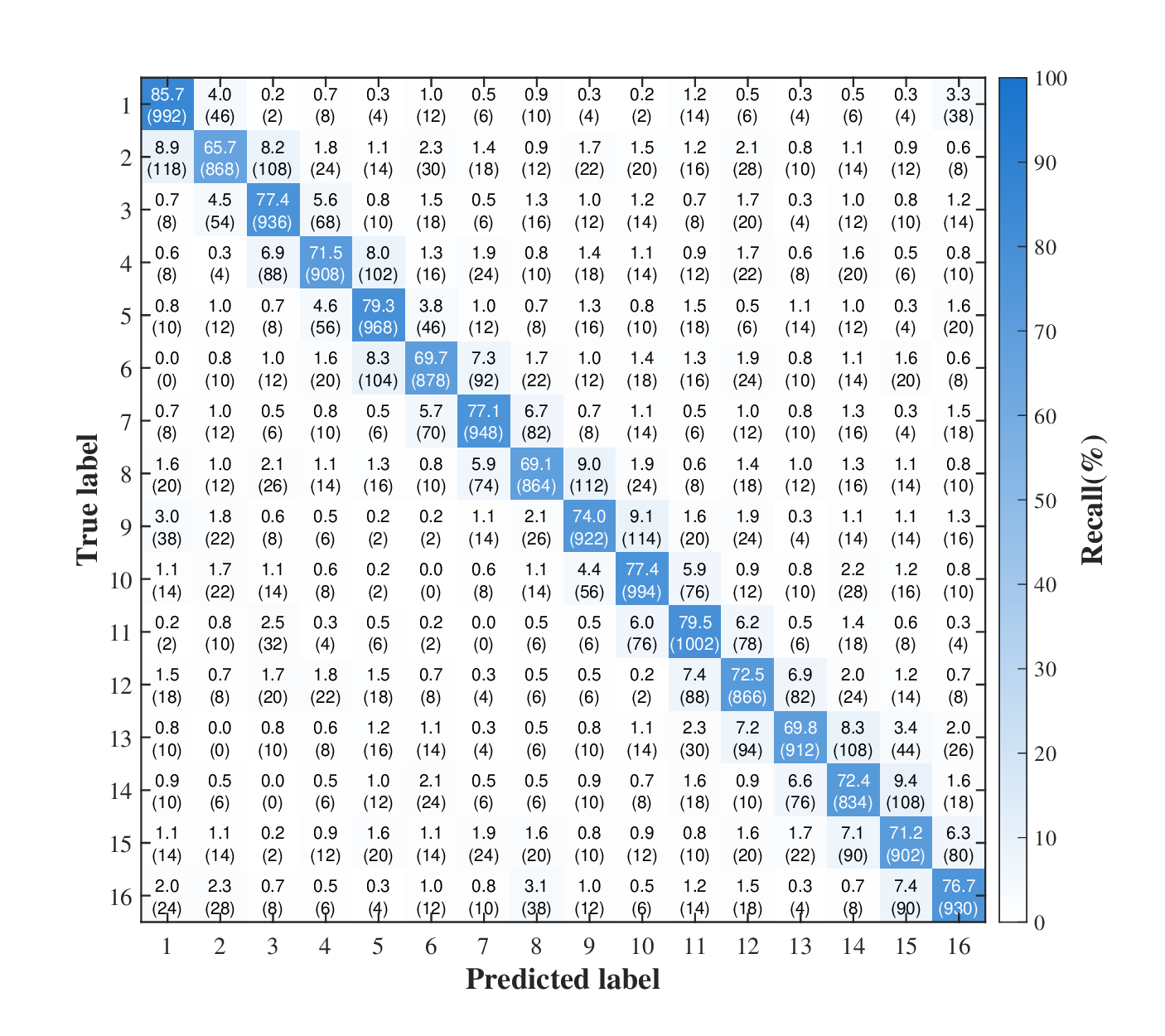}
	\caption{Confusion matrix for Fusion-TDC (5 km/h).}
	\label{fig:confusion mat}
\end{figure}

To investigate the influence of speed on orientation estimation, we have added simulations at 3 km/h, 8 km/h, and 30 km/h (corresponding to slow, fast walking, and uncrewed aircraft, respectively), and selected continuous CSI recording as the comparative method. To ensure fairness, the continuous CSI recording method uses two sets of CSI separated by one frame interval (the duration of a single TBF), with positions estimated by MaskDETR-Reg. The direction of the positional change is then taken as the orientation result. The simulation results are shown in Table \ref{tb:speed factor}. The proposed Fusion-TDC exhibits stable performance, with better results at higher speeds. Due to inherent localization errors and the short time interval, continuous CSI recording yields lower accuracy. This further demonstrates the advantage and feasibility of TBF for real-time orientation estimation.

\newcolumntype{L}{>{\hspace*{-\tabcolsep}}l}
\newcolumntype{R}{r<{\hspace*{-\tabcolsep}}}
\newcolumntype{C}{>{\centering\arraybackslash}p{1cm}}
\definecolor{lightblue}{rgb}{0.93,0.95,1.0}
\definecolor{lightgreen}{rgb}{0.95,1.0,0.93}
\begin{table}[htbp]
	\captionsetup{font=footnotesize}
	\caption{Orientation Estimation at Different Speeds}\label{tb:speed factor}
	\centering
	\ra{1.5}
	\scriptsize
	\begin{tabular}{LCCCR}
		\toprule
		Method & 3 km/h & 5 km/h & 8 km/h & 30 km/h\\
		\rowcolor{lightblue}
		\midrule
		\textbf{Fusion-TDC} & 72.9$\%$ & 74.2$\%$ & 74.7$\%$& 79.3$\%$\\
		Continuous CSI & 27.8$\%$ & 29.1$\%$ & 30.6$\%$& 49.7$\%$\\
		\bottomrule
	\end{tabular}
\end{table}

Table \ref{tb:SNR orientation} shows the orientation accuracy of the proposed Fusion-TDC method at 5 km/h under different SNR values. The results indicate that the accuracy improves slightly with increasing SNR and gradually stabilizes at high SNR levels, reaching 74.2$\%$ at 20 dB. This demonstrates that the proposed method is relatively robust to noise variations.

\newcolumntype{L}{>{\hspace*{-\tabcolsep}}l}
\newcolumntype{R}{r<{\hspace*{-\tabcolsep}}}
\newcolumntype{C}{>{\centering\arraybackslash}p{0.7cm}}
\definecolor{lightblue}{rgb}{0.93,0.95,1.0}
\definecolor{lightgreen}{rgb}{0.95,1.0,0.93}
\begin{table}[htbp]
	\captionsetup{font=footnotesize}
	\caption{Orientation Accuracy under Different SNR Values}\label{tb:SNR orientation}
	\centering
	\ra{1.5}
	\scriptsize
	\begin{tabular}{LCCCCR}
		\toprule
		SNR (dB) & 0 & 5 & 10 & 15 & 20\\
		\rowcolor{lightblue}
		\midrule
		Accuracy & 58.9$\%$ & 64.3$\%$ & 69.6$\%$& 72.5$\%$& 74.2$\%$\\
		\bottomrule
	\end{tabular}
\end{table}

\subsection{Complexity Analysis}

The computational and space complexities of the comparative models are summarized in Table \ref{tb:Complexity analysis}. 
\newcolumntype{L}{>{\hspace*{-\tabcolsep}}l}
\newcolumntype{R}{r<{\hspace*{-\tabcolsep}}}
\newcolumntype{C}{>{\centering\arraybackslash}p{2cm}}
\definecolor{lightblue}{rgb}{0.93,0.95,1.0}
\definecolor{lightgreen}{rgb}{0.95,1.0,0.93}
\begin{table}[htbp]
	\captionsetup{font=footnotesize}
	\caption{Computational and Space Complexity}\label{tb:Complexity analysis}
	\centering
	\ra{1.5}
	\scriptsize
	\begin{tabular}{LCR}
		\toprule
		Model & FlOPs & Param \\
		\rowcolor{lightblue}
		\midrule
		2D-CNN & 761.5M & 4.83M \\
		3D-CNN & 893.7M & 13.51M \\
		\rowcolor{lightblue}
		\textbf{DE-MaskDETR} & 296.3M & 3.39M \\
		\textbf{EC-MaskDETR} & 419.7M & 4.31M \\
		\rowcolor{lightblue}
		Fusion-TDC & 0.43M & 0.43M \\
		\bottomrule
	\end{tabular}
\end{table}

The DETR architecture effectively leverages the TBF and integrates with a mask mechanism, achieving high estimation accuracy even with a lightweight structure. In contrast, the CNN-based approach requires stacking numerous convolutional layers to process the entire TBF and applies uniform computations to both sparse and dense regions, leading to substantial redundant operations.

For the two backbone methods before the DETR framework, the convergence of DE-MaskDETR is relatively slow, but its performance is better. The faster convergence of EC-MaskDETR mainly results from its simpler backbone, which makes optimization easier during early training but limits feature representation capability, leading to slightly lower localization accuracy. Its higher complexity is primarily caused by the increased number of transformer encoder layers required to compensate for the relatively shallow backbone.

\section{Conclusion}\label{sec:conclusion}
In this paper, we utilized the TB domain channel power tensor as fingerprints and combined its structural features to propose a high-accuracy localization and orientation awareness network in a massive MIMO-OFDM system. Based on the TB domain channel model, we demonstrated that transforming SFTF to TBF not only reduced its size but also maintained the discriminability between fingerprints. Then we proposed LOA-Net composed of MaskDETR-Reg module and Fusion-TDC module. The MaskDETR-Reg module enhanced region-of-interest highlighting by fusing angle-delay domain information from TBF with masks, and estimated coordinates through DETR. The Fusion-TDC module combined Doppler domain information with estimated coordinates, and predicted motion direction using Transformer and a classifier. Simulation results demonstrated that the proposed method outperformed WKNN and CNN-based localization approaches while also achieving high accuracy in motion direction estimation. In the future, a meaningful topic would be to optimize fingerprint construction and network models around Doppler frequency, enabling simultaneous estimation of UT speed and motion direction.

\appendices
\section{Proof Of Theorem \ref{theorem:1}}\label{app:Theorem 1}
We demonstrate the discriminability of TBF across three dimensions of the TB domain. The subsequent three lemmas correspond to the investigation of angle, delay, and Doppler domains, respectively.

\begin{lemma}\label{lemma 1}
	With enough antennas, the angle transform matrix is able to acquire the DOA from the array response vector, i.e.
	\begin{align}
		\lim_{M_{\mathrm{c}}\to\infty}&\left(\frac{1}{\sqrt{M_{\mathrm{c}}}}(\mathbf{W}^{\phi}_{M_{\mathrm{c}}})^H\circ_{1}\mathbf{f}^{c}(\theta_{k,p})-\mathbf{\Lambda} _{M_{\mathrm{c}}}^{\breve{c}_{k,p}}\right)=\mathbf{0},\label{eq:angle column}\\
		\lim_{M_{\mathrm{r}}\to\infty}&\left(\frac{1}{\sqrt{M_{\mathrm{r}}}}(\mathbf{W}^{\phi}_{M_{\mathrm{r}}})^H\circ_{1}\mathbf{f}^{r}(\theta_{k,p},\varphi_{k,p})-\mathbf{\Lambda} _{M_{\mathrm{r}}}^{\breve{r}_{k,p}}\right)=\mathbf{0},\label{eq:angle row}
	\end{align}
\end{lemma}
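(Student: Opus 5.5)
Both limits in Lemma \ref{lemma 1} come from computing the entries of the transformed vector explicitly, so I plan to write them as a geometric sum that collapses to a Dirichlet-type kernel. I treat (\ref{eq:angle column}) in detail; (\ref{eq:angle row}) is the same computation with $M_{\mathrm{c}}$ replaced by $M_{\mathrm{r}}$ and $\frac{d_{\mathrm{r}}}{\lambda_c}\cos\theta_{k,p}$ replaced by $\frac{d_{\mathrm{c}}}{\lambda_c}\sin\theta_{k,p}\cos\varphi_{k,p}$.

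First, I compute the $j$-th entry from the definition (\ref{eq:angle transform matrices}). Taking the conjugate transpose gives $[(\mathbf{W}^{\phi}_{M_{\mathrm{c}}})^H]_{j,i}=\frac{1}{\sqrt{M_{\mathrm{c}}}}e^{\bar{\imath}2\pi\frac{i(2j-M_{\mathrm{c}})}{2M_{\mathrm{c}}}}$. Multiplying by $[\mathbf{f}^{c}(\theta_{k,p})]_i=e^{-\bar{\imath}2\pi i\frac{d_{\mathrm{r}}}{\lambda_c}\cos\theta_{k,p}}$ and summing over $i$ gives
\begin{equation*}
\Big[\tfrac{1}{\sqrt{M_{\mathrm{c}}}}(\mathbf{W}^{\phi}_{M_{\mathrm{c}}})^H\circ_1\mathbf{f}^{c}\Big]_j=\frac{1}{M_{\mathrm{c}}}\sum_{i=0}^{M_{\mathrm{c}}-1}e^{\bar{\imath}2\pi i\,\frac{j-\breve{c}_{k,p}}{M_{\mathrm{c}}}},
\end{equation*}
because $\frac{2j-M_{\mathrm{c}}}{2M_{\mathrm{c}}}-\frac{d_{\mathrm{r}}}{\lambda_c}\cos\theta_{k,p}=\frac{j-\breve{c}_{k,p}}{M_{\mathrm{c}}}$ by the definition of $\breve{c}_{k,p}$ in Theorem \ref{theorem:1}. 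This identity is what places the peak at $\breve{c}_{k,p}$.

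Second, I evaluate the geometric sum. Write $\delta_j=j-\breve{c}_{k,p}$. If $\delta_j=0$, the entry equals $1$. Otherwise its modulus is
\begin{equation*}
\left|\frac{\sin(\pi\delta_j)}{M_{\mathrm{c}}\sin(\pi\delta_j/M_{\mathrm{c}})}\right|.
\end{equation*}
In the setting of Theorem \ref{theorem:1}, $\breve{c}_{k,p}$ is treated as an integer for large $M_{\mathrm{c}}$. Then $\delta_j$ is a nonzero integer with $|\delta_j|<M_{\mathrm{c}}$, so the numerator vanishes and the entry is exactly $0$; this is the DFT orthogonality. The vector therefore equals $\mathbf{\Lambda}_{M_{\mathrm{c}}}^{\breve{c}_{k,p}}$ and the difference in (\ref{eq:angle column}) is $\mathbf{0}$.

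Third, and this is the main obstacle, I must handle $\breve{c}_{k,p}$ not being exactly an integer, which is where the limit $M_{\mathrm{c}}\to\infty$ actually matters. I would compare against the nearest integer index, so that $\epsilon=\breve{c}_{k,p}-\mathrm{round}(\breve{c}_{k,p})$ satisfies $|\epsilon|\le 1/2$. The entry at the peak is a Fejér/Dirichlet-kernel value bounded away from zero. Every other entry has modulus $O(1/|\delta_j|)$, using $M_{\mathrm{c}}\sin(\pi\delta_j/M_{\mathrm{c}})\ge 2|\delta_j|$ for $|\delta_j|\le M_{\mathrm{c}}/2$, where wrap-around indices are handled modulo $M_{\mathrm{c}}$. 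This gives leakage decaying away from the peak but does not give entrywise convergence to a unit vector for fixed $\epsilon$.

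I would therefore follow the paper's convention and absorb the off-grid error into the stated assumption that $\breve{c}_{k,p}$ may be thought of as an integer. The justification is that the angular grid spacing $1/M_{\mathrm{c}}$ shrinks as $M_{\mathrm{c}}\to\infty$, so $\frac{d_{\mathrm{r}}}{\lambda_c}\cos\theta_{k,p}$ is approximated by a grid point to within $o(1)$ relative resolution. As a consequence, the normalized energy outside a neighborhood of the peak whose width is $o(M_{\mathrm{c}})$ tends to zero, which is the sense in which the limit vanishes.
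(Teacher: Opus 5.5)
Your proposal is correct and follows essentially the paper's route. Both arguments write each entry as the Dirichlet kernel $S_{M_{\mathrm{c}}}$ centered at $\breve{c}_{k,p}$ and rely on the convention that $\breve{c}_{k,p}$ is an integer. Your recentring via $\delta_j=j-\breve{c}_{k,p}$ turns the on-grid case into exact DFT orthogonality for every $M_{\mathrm{c}}$, which is cleaner than the paper's appeal to the fixed-$x$ limit $S_L(x)\to\delta(x)$, because there $x$ depends on $M_{\mathrm{c}}$. Your off-grid caveat is also accurate, though the concentration claim in your last sentence needs the neighborhood width to tend to infinity, not merely be $o(M_{\mathrm{c}})$.
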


\begin{IEEEproof}
	For the $i$-th element of $\frac{1}{\sqrt{M_{\mathrm{c}}}}(\mathbf{W}^{\phi}_{M_{\mathrm{c}}})^H\circ_{1}\mathbf{f}^{c}(\theta_{k,p})$,
	\begin{align}\label{eq:angle discriminability}
		&[\frac{1}{\sqrt{M_{\mathrm{c}}}}(\mathbf{W}^{\phi}_{M_{\mathrm{c}}})^H\circ_{1}\mathbf{f}^{c}(\theta_{k,p})]_{i}\nonumber\\
		=&\frac{1}{\sqrt{M_{\mathrm{c}}}}\sum_{j=0}^{M_{\mathrm{c}}-1}[(\mathbf{W}^{\phi}_{M_{\mathrm{c}}})^{H}]_{i,j}[\mathbf{f}^{c}(\theta_{k,p})]_{j}\nonumber\\
		=&\frac{1}{\sqrt{M_{\mathrm{c}}}}\sum_{j=0}^{M_{\mathrm{c}}-1}\frac{1}{\sqrt{M_{\mathrm{c}}}}e^{\bar{\imath}2\pi\frac{j(2i-M_{\mathrm{c}})}{2M_{\mathrm{c}}}}e^{-\bar{\imath}2\pi j\frac{d_{\mathrm{r}}}{\lambda_c}\cos\theta_{k,p}}\nonumber\\
		=&e^{-\bar{\imath}\pi\left(\frac{d_{\mathrm{r}}}{\lambda_c}\cos\theta_{k,p}-\frac{2i-M_{\mathrm{c}}}{2M_{\mathrm{c}}}\right)(M_{\mathrm{c}}-1)}\nonumber\\
		&\cdot \frac{\sin\Big(M_{\mathrm{c}}\big(\frac{d_{\mathrm{r}}}{\lambda_c}\cos\theta_{k,p}-\frac{2i-M_{\mathrm{c}}}{2M_{\mathrm{c}}}\big)\pi\Big)}{M_{\mathrm{c}}\sin\Big(\big(\frac{d_{\mathrm{r}}}{\lambda_c}\cos\theta_{k,p}-\frac{2i-M_{\mathrm{c}}}{2M_{\mathrm{c}}}\big)\pi\Big)}\nonumber\\
		=&e^{-\bar{\imath}\pi\left(\frac{d_{\mathrm{r}}}{\lambda_c}\cos\theta_{k,p}-\frac{2i-M_{\mathrm{c}}}{2M_{\mathrm{c}}}\right)(M_{\mathrm{c}}-1)}\nonumber\\
		&\cdot S_{M_{\mathrm{c}}}\Big(\frac{d_{\mathrm{r}}}{\lambda_c}\cos\theta_{k,p}-\frac{2i-M_{\mathrm{c}}}{2M_{\mathrm{c}}}\Big),
	\end{align}
	where 
	\begin{equation}\label{eq:frac sin}
		S_{L}(x)\triangleq\frac{\sin(L\pi x)}{L\sin(\pi x)},
	\end{equation}
	and $\lim_{L\to\infty}S_{L}(x)=\delta(x)$ for $-1<x<1$, where $\delta(x)$ is $1$ when $x=0$ and $0$ otherwise. Let
	\begin{equation}\label{eq:angle i}
		\breve{c}_{k,p}=\frac{M_{\mathrm{c}}d_{\mathrm{r}}}{\lambda_c}\cos\theta_{k,p}+\frac{M_{\mathrm{c}}}{2}.
	\end{equation}
	Notably, antenna spacing $d_{\mathrm{r}}$ satifies $d_{\mathrm{r}}=\frac{\lambda_c}{2}$ and $i<M_{\mathrm{c}}$, so $\left|\frac{d_{\mathrm{r}}}{\lambda_c}\cos\theta_{k,p}-\frac{2i-M_{\mathrm{c}}}{2M_{\mathrm{c}}}\right|<1$. We then have (\ref{eq:angle column}), and similarly (\ref{eq:angle row}).
\end{IEEEproof}

Similarly, we have the following Lemmas \ref{lemma 2} and \ref{lemma 3}.

\begin{lemma}\label{lemma 2}
	With a sufficient number of subcarriers, the delay transform matrix is able to acquire the TOA from the frequency domain steering vector, i.e.,
	\begin{equation}\label{eq:lemma delay}
		\lim_{N_{\mathrm{c}}\to\infty}\!\!\Big(\!\frac{1}{\sqrt{N_{\mathrm{c}}}}\!(\!\mathbf{W}^{\tau}_{N_{\mathrm{c}},N_{\mathrm{g}}}\!)\!^H\!\!\circ_{2}\!\mathbf{H}^{\mathrm{SF}}_{k}\!\!-\!\mathbf{f}^{\phi}\!(\!\theta_{k,p},\varphi_{k,p})\!\bullet\mathbf{\Lambda} _{N_{\mathrm{g}}}^{(\frac{\tau_{k,p}}{T_{\mathrm{s}}})}\Big)=\mathbf{0},
	\end{equation}
	where $\mathbf{H}^{\mathrm{SF}}_{k}=\mathbf{f}^{\phi}(\theta_{k,p},\varphi_{k,p})\bullet\mathbf{f}^{\tau}(\tau_{k,p})$ is the space-frequency domain phase factor matrix.
\end{lemma}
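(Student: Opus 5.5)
The plan mirrors the proof of Lemma \ref{lemma 1}: the space factor $\mathbf{f}^{\phi}(\theta_{k,p},\varphi_{k,p})$ is untouched by the mode-2 product. We therefore first factor it out and reduce the claim to a statement about the delay steering vector alone. By the definition of the outer product and of $\circ_2$, for each antenna index $a$ and delay-beam index $j$,
\begin{equation*}
\Big[\tfrac{1}{\sqrt{N_{\mathrm{c}}}}(\mathbf{W}^{\tau}_{N_{\mathrm{c}},N_{\mathrm{g}}})^H\circ_{2}\mathbf{H}^{\mathrm{SF}}_{k}\Big]_{a,j}=\big[\mathbf{f}^{\phi}(\theta_{k,p},\varphi_{k,p})\big]_a\cdot \tfrac{1}{\sqrt{N_{\mathrm{c}}}}\sum_{c=0}^{N_{\mathrm{c}}-1}\big[(\mathbf{W}^{\tau}_{N_{\mathrm{c}},N_{\mathrm{g}}})^{H}\big]_{j,c}f^{\tau}_{k,p,c}.
\end{equation*}
Every entry of $\mathbf{f}^{\phi}$ has unit modulus and its dimension does not depend on $N_{\mathrm{c}}$. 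So it suffices to show that the scalar sum converges to $[\mathbf{\Lambda}_{N_{\mathrm{g}}}^{(\tau_{k,p}/T_{\mathrm{s}})}]_j$ for every $j\in\{0,\dots,N_{\mathrm{g}}-1\}$.

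Next we evaluate this scalar sum in closed form. Substituting the definition (\ref{eq:delay transform matrices}) and (\ref{eq:phase_factor}) with $\Delta f=\frac{1}{N_{\mathrm{c}}T_{\mathrm{s}}}$, the sum becomes
\begin{equation*}
\frac{1}{N_{\mathrm{c}}}\sum_{c=0}^{N_{\mathrm{c}}-1}e^{\bar{\imath}2\pi \frac{c}{N_{\mathrm{c}}}\left(j-\frac{\tau_{k,p}}{T_{\mathrm{s}}}\right)}.
\end{equation*}
This is a geometric series. Summing it gives a unimodular phase factor $e^{\bar{\imath}\pi(N_{\mathrm{c}}-1)x}$ times $S_{N_{\mathrm{c}}}(x)$ from (\ref{eq:frac sin}), with $x=\frac{1}{N_{\mathrm{c}}}\big(j-\frac{\tau_{k,p}}{T_{\mathrm{s}}}\big)$.

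Finally we take the limit. When $j=\tau_{k,p}/T_{\mathrm{s}}$, the exponent vanishes and the sum equals exactly $1$. When $j\neq \tau_{k,p}/T_{\mathrm{s}}$, the cyclic-prefix assumption $N_{\mathrm{g}}T_{\mathrm{s}}>\tau_{k,p}$ gives $|j-\tau_{k,p}/T_{\mathrm{s}}|<N_{\mathrm{g}}<N_{\mathrm{c}}$, so $|x|<1$. The paper's delay grid makes $\tau_{k,p}/T_{\mathrm{s}}$ an integer, so $N_{\mathrm{c}}x=j-\tau_{k,p}/T_{\mathrm{s}}$ is a nonzero integer. Hence $\sin(N_{\mathrm{c}}\pi x)=0$ and the sum vanishes exactly, for all $N_{\mathrm{c}}$ large enough. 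This already yields the elementwise limit (\ref{eq:lemma delay}). Even if $\tau_{k,p}/T_{\mathrm{s}}$ is only approximately an integer, one can bound $|S_{N_{\mathrm{c}}}(x)|$ instead, invoking $\lim_{L\to\infty}S_L(x)=\delta(x)$ exactly as in Lemma \ref{lemma 1}.

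The main obstacle is not the algebra but the indexing conventions. We must check that the $(\cdot)^H$ of the $N_{\mathrm{c}}\times N_{\mathrm{g}}$ matrix, combined with $\circ_2$ acting on the frequency mode, produces the conjugated exponent with the sign that cancels $e^{-\bar{\imath}2\pi c\tau_{k,p}\Delta f}$. We must also verify that $|x|<1$ holds uniformly in $j$ as $N_{\mathrm{c}}\to\infty$ with $N_{\mathrm{g}}$ fixed, or growing proportionally with $N_{\mathrm{g}}<N_{\mathrm{c}}$. I would settle the sign by writing out one entry explicitly before summing.
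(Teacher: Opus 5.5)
Your proposal is correct and follows the paper's proof essentially step for step. You factor out $\mathbf{f}^{\phi}(\theta_{k,p},\varphi_{k,p})$, sum the geometric series into a unimodular phase times $S_{N_{\mathrm{c}}}(x)$, and use $|x|<1$ together with the integrality of $\tau_{k,p}/T_{\mathrm{s}}$ on the delay grid; your remark that $S_{N_{\mathrm{c}}}(x)$ then vanishes exactly off the peak is a slightly sharper form of the paper's limit statement.

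One caveat: drop the fallback for non-integer $\tau_{k,p}/T_{\mathrm{s}}$. Here $x=d/N_{\mathrm{c}}$ with $d=j-\tau_{k,p}/T_{\mathrm{s}}$ shrinks as $N_{\mathrm{c}}$ grows, and $|S_{N_{\mathrm{c}}}(d/N_{\mathrm{c}})|\ge|\sin(\pi d)|/(\pi|d|)$. So the leakage does not vanish, the fixed-$x$ limit $S_L(x)\to\delta(x)$ cannot be invoked, and the integer-grid assumption is genuinely needed.
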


\begin{IEEEproof}
	Refer to (\ref{eq:angle discriminability}), $\frac{1}{\sqrt{N_{\mathrm{c}}}}(\mathbf{W}^{\tau}_{N_{\mathrm{c}},N_{\mathrm{g}}})^H\circ_{2}\mathbf{H}^{\mathrm{SF}}_{k}$ is  calculated column by column. For the $i$-th colnum,
	\begin{align}\label{eq:delay discriminability}
		&\left[\frac{1}{\sqrt{N_{\mathrm{c}}}}(\mathbf{W}^{\tau}_{N_{\mathrm{c}},N_{\mathrm{g}}})^H\circ_{2}\mathbf{H}^{\mathrm{SF}}_{k}\right]_{:,i}\nonumber\\
		=&\frac{1}{\sqrt{N_{\mathrm{c}}}}\sum_{j=0}^{N_{\mathrm{c}}-1}[(\mathbf{W}^{\tau}_{N_{\mathrm{c}},N_{\mathrm{g}}})^{H}]_{i,j}[\mathbf{H}^{\mathrm{SF}}_{k}]_{:,j}\nonumber\\
		=&\frac{1}{\sqrt{N_{\mathrm{c}}}}\sum_{j=0}^{N_{\mathrm{c}}-1}\frac{1}{\sqrt{N_{\mathrm{c}}}}e^{\bar{\imath}2\pi\frac{ji}{N_{\mathrm{c}}}}\mathbf{f}^{\phi}(\theta_{k,p},\varphi_{k,p})e^{-\bar{\imath} 2\pi j\tau_{k,p}\Delta f}\nonumber\\
		=&\mathbf{f}^{\phi}(\theta_{k,p},\varphi_{k,p})e^{-\bar{\imath}\pi\frac{(N_{\mathrm{c}}-1)}{N_{\mathrm{c}}}(\frac{\tau_{k,p}}{T_{\mathrm{s}}}-i)}S_{N_{\mathrm{c}}}\left(\frac{\frac{\tau_{k,p}}{T_{\mathrm{s}}}-i}{N_{\mathrm{c}}}\right).
	\end{align}
	Since $\tau_{k,p}<N_{\mathrm{g}}T_{\mathrm{s}}$ and $i\le N_{\mathrm{c}}-1$, we have
	\begin{equation}\label{eq:delay range}
		\left|\frac{\tau_{k,p}}{T_{\mathrm{s}}}-i\right|<N_{\mathrm{c}}.
	\end{equation}
	Hence, The only integer $\frac{1}{N_{\mathrm{c}}}(\frac{\tau_{k,p}}{T_{\mathrm{s}}}-i)$ is $0$. When $N_{\mathrm{c}}\to\infty$, $\displaystyle\lim_{N_{\mathrm{c}}\to\infty}S_{N_{\mathrm{c}}}\left(\frac{1}{N_{\mathrm{c}}}(\frac{\tau_{k,p}}{T_{\mathrm{s}}}-i)\right)\neq0$ if and only if $i=\frac{\tau_{k,p}}{T_{\mathrm{s}}}$. In this case, (\ref{eq:delay discriminability}) is represented as
	\begin{align}
		&\lim_{N_{\mathrm{c}}\to\infty}\Bigg\{\left[\frac{1}{\sqrt{N_{\mathrm{c}}}}(\mathbf{W}^{\tau}_{N_{\mathrm{c}},N_{\mathrm{g}}})^H\circ_{2}\mathbf{H}^{\mathrm{SF}}_{k}\right]_{:,i}\nonumber\\
		-&\mathbf{f}^{\phi}(\theta_{k,p},\varphi_{k,p})\delta( i-\frac{\tau_{k,p}}{T_{\mathrm{s}}})\Bigg\}=0.
	\end{align}
	Then, combining the values of each column, we get (\ref{eq:lemma delay}). This completes the proof.
\end{IEEEproof}

\begin{lemma}\label{lemma 3}
	When the number of OFDM symbols is sufficient, Doppler frequency can be acquired by the Doppler transform matrix from the time domain steering vector, i.e.,
	\begin{align}\label{eq:lemma doppler}
		&\lim_{N_{\mathrm{t}}\to\infty}\Big(\frac{1}{\sqrt{N_{\mathrm{t}}}}(\mathbf{W}^{\nu}_{N_{\mathrm{t}},N_{\mathrm{f}}})^H\circ_{3}\mathcal{H}^{\mathrm{SFT}}_k\nonumber\\
		-&\sum_{p=0}^{P-1}\!\beta_{k,p}\mathbf{f}^{\phi}(\theta_{k,p},\varphi_{k,p})\!\bullet\mathbf{f}^{\tau}(\tau_{k,p})\!\bullet\mathbf{\Lambda} _{N_{\mathrm{f}}}^{(N_{\mathrm{t}}\nu_{k,p}T_{\mathrm{sym}}+\frac{N_{\mathrm{f}}}{2})}\Big)=\mathbf{0}.
	\end{align}
\end{lemma}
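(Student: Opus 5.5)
We follow the same route as Lemmas \ref{lemma 1} and \ref{lemma 2}: compute the mode-3 product entrywise and reduce it to a Dirichlet kernel $S_{N_{\mathrm{t}}}(\cdot)$ from (\ref{eq:frac sin}). Because the mode-3 product is linear and $\mathcal{H}^{\mathrm{SFT}}_k$ in (\ref{eq:SFT channel}) is a finite sum of rank-one terms, it suffices to treat a single path $p$ and show
\begin{equation}
\lim_{N_{\mathrm{t}}\to\infty}\Big(\frac{1}{\sqrt{N_{\mathrm{t}}}}(\mathbf{W}^{\nu}_{N_{\mathrm{t}},N_{\mathrm{f}}})^H\mathbf{f}^{\nu}(\nu_{k,p})-\mathbf{\Lambda}_{N_{\mathrm{f}}}^{(N_{\mathrm{t}}\nu_{k,p}T_{\mathrm{sym}}+\frac{N_{\mathrm{f}}}{2})}\Big)=\mathbf{0}.
\end{equation}
The space and frequency factors $\beta_{k,p}\mathbf{f}^{\phi}(\theta_{k,p},\varphi_{k,p})\bullet\mathbf{f}^{\tau}(\tau_{k,p})$ do not depend on the summation index. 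They therefore pass through the product unchanged, just as $\mathbf{f}^{\phi}$ passed through in (\ref{eq:delay discriminability}). Summing over $p$ then gives (\ref{eq:lemma doppler}).

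For the single-path vector, we write its $i$-th entry as
\begin{equation}
\frac{1}{N_{\mathrm{t}}}\sum_{n=0}^{N_{\mathrm{t}}-1}e^{-\bar{\imath}2\pi(n_{\mathrm{T}}+\frac{n}{N_{\mathrm{s}}})\frac{2i-N_{\mathrm{f}}}{2N_{\mathrm{f}}}}\,e^{\bar{\imath}2\pi n_{\mathrm{T}}N_{\mathrm{s}}\nu_{k,p}T_{\mathrm{sym}}}\,e^{\bar{\imath}2\pi\nu_{k,p}nT_{\mathrm{sym}}}.
\end{equation}
The terms without $n$ give a unimodular constant, which we collect in front. The remaining sum is geometric, with ratio $e^{\bar{\imath}2\pi x_i}$, where
\begin{equation}
x_i=\nu_{k,p}T_{\mathrm{sym}}-\frac{2i-N_{\mathrm{f}}}{2N_{\mathrm{t}}}=\frac{1}{N_{\mathrm{t}}}\Big(N_{\mathrm{t}}\nu_{k,p}T_{\mathrm{sym}}+\frac{N_{\mathrm{f}}}{2}-i\Big),
\end{equation}
using $N_{\mathrm{t}}=N_{\mathrm{f}}N_{\mathrm{s}}$. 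Summing it gives a phase factor times $S_{N_{\mathrm{t}}}(x_i)$.

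The allowed grid for $\nu_{k,p}$ gives $N_{\mathrm{t}}\nu_{k,p}T_{\mathrm{sym}}+\frac{N_{\mathrm{f}}}{2}\in\{0,\dots,N_{\mathrm{f}}-1\}$, and $0\le i\le N_{\mathrm{f}}-1$. Hence $N_{\mathrm{t}}x_i$ is an integer with $|N_{\mathrm{t}}x_i|<N_{\mathrm{f}}\le N_{\mathrm{t}}$, so $|x_i|<1$. On this range $S_{N_{\mathrm{t}}}(x_i)=1$ exactly when $i=N_{\mathrm{t}}\nu_{k,p}T_{\mathrm{sym}}+\frac{N_{\mathrm{f}}}{2}$. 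In that case the phase factor is also $1$, because $x_i=0$ and the $n_{\mathrm{T}}$-dependent phases cancel; checking this cancellation is precisely why the prefactor $e^{\bar{\imath}2\pi n_{\mathrm{T}}N_{\mathrm{s}}\nu_{k,p}T_{\mathrm{sym}}}$ was built into $\mathbf{f}^{\nu}$. For every other $i$, $S_{N_{\mathrm{t}}}(x_i)$ vanishes or tends to zero as $N_{\mathrm{t}}\to\infty$, by the limit $\lim_{L\to\infty}S_L(x)=\delta(x)$ used in Lemma \ref{lemma 1}.

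The main obstacle is the bookkeeping of the phases, not the limit itself. We must confirm that the $n_{\mathrm{T}}$ term in (\ref{eq:dopp transform matrices}) and the prefactor of $\mathbf{f}^{\nu}$ combine into a pure phase that equals $1$ at the peak. We must also confirm that the index shift by $\frac{N_{\mathrm{f}}}{2}$ is consistent with the sign convention of $(2j-N_{\mathrm{f}})$. If $N_{\mathrm{t}}\nu_{k,p}T_{\mathrm{sym}}$ is not exactly an integer, we fall back on the convention stated in Theorem \ref{theorem:1} and treat it as an integer. Once the per-path Doppler statement is in hand, we combine it with Lemmas \ref{lemma 1} and \ref{lemma 2}. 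Taking $\mathbb{E}\{\cdot\odot(\cdot)^*\}$ with independent paths, so that $\mathbb{E}\{\beta_{k,p}\beta_{k,p'}^*\}=\sigma_{k,p}^2\delta(p-p')$, then gives Theorem \ref{theorem:1}.
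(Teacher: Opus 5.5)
Your proposal is correct and follows essentially the same route as the paper: expand the mode-3 product entrywise, sum the geometric series into a phase times $S_{N_{\mathrm{t}}}(x_i)$, bound $|x_i|<1$, and read off the peak at $i=N_{\mathrm{t}}\nu_{k,p}T_{\mathrm{sym}}+\frac{N_{\mathrm{f}}}{2}$. Reducing to one path by linearity and checking that the $n_{\mathrm{T}}$ phases cancel at the peak are details the paper leaves implicit. Note that the off-peak entries vanish because of your integrality observation, not because of the pointwise limit $S_L(x)\to\delta(x)$: $N_{\mathrm{t}}x_i$ is a nonzero integer of modulus below $N_{\mathrm{t}}$, so $S_{N_{\mathrm{t}}}(x_i)=0$ exactly, whereas the pointwise limit does not apply directly because $x_i$ itself depends on $N_{\mathrm{t}}$.
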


\begin{IEEEproof}
	For $\frac{1}{\sqrt{N_{\mathrm{t}}}}(\mathbf{W}^{\nu}_{N_{\mathrm{t}},N_{\mathrm{f}}})^H\circ_{3}\mathcal{H}^{\mathrm{SFT}}_k$, we first calculate the $i$-th element of its third dimension,
	\begin{align}\label{eq:doppler discriminability}
		&\left[\frac{1}{\sqrt{N_{\mathrm{t}}}}(\mathbf{W}^{\nu}_{N_{\mathrm{t}},N_{\mathrm{f}}})^H\circ_{3}\mathcal{H}^{\mathrm{SFT}}_k\right]_{:,:,i}\nonumber\\
		=&\frac{1}{\sqrt{N_{\mathrm{t}}}}\sum_{j=0}^{N_{\mathrm{t}}-1}\left[(\mathbf{W}^{\nu}_{N_{\mathrm{t}},N_{\mathrm{f}}})^H\right]_{i,j}\left[\mathcal{H}^{\mathrm{SFT}}_k\right]_{:,:,j}\nonumber\\
		=&\frac{1}{N_{\mathrm{t}}}\sum_{j=0}^{N_{\mathrm{t}}-1}\sum_{p=0}^{P-1}\beta_{k,p}\mathbf{f}^{\phi}(\theta_{k,p},\varphi_{k,p})\bullet\mathbf{f}^{\tau}(\tau_{k,p})\nonumber\\
		&\cdot e^{\bar{\imath}2\pi(n_{\mathrm{T}}N_{\mathrm{s}} \nu_{k,p} T_{\mathrm{sym}}+\nu_{k,p} T_{\mathrm{sym}}j)}e^{-\bar{\imath}2\pi(n_{\mathrm{T}}+\frac{j}{N_{\mathrm{s}}})(\frac{2i-N_{\mathrm{f}}}{2N_{\mathrm{f}}})}\nonumber\\
		=&\sum_{p=0}^{P-1}\!\beta_{k,p}\mathbf{f}^{\phi}\!(\theta_{k,p},\varphi_{k,p})\!\bullet\!\mathbf{f}^{\tau}\!(\tau_{k,p})e^{\bar{\imath}2\pi n_{\mathrm{T}}(N_{\mathrm{s}} \nu_{k,p} T_{\mathrm{sym}}-\frac{2i-N_{\mathrm{f}}}{2N_{\mathrm{f}}})}\nonumber\\
		&\cdot \! e^{\bar{\imath}\pi(\nu_{k,p} T_{\mathrm{sym}}-\frac{2i-N_{\mathrm{f}}}{2N_{\mathrm{t}}})(\!N_{\mathrm{t}}-1\!)}S_{N_{\mathrm{t}}}(\nu_{k,p} T_{\mathrm{sym}}\!-\!\frac{2i-N_{\mathrm{f}}}{2N_{\mathrm{t}}}).
	\end{align}
	Notice that $\nu_{k,p}\in[-\frac{1}{2N_{\mathrm{s}}T_{\mathrm{sym}}},\frac{N_{\mathrm{f}}-2}{2N_{\mathrm{t}}T_{\mathrm{sym}}}]$ and $i\in[0,N_{\mathrm{f}}-1]$, we have
	\begin{equation}\label{eq:doppler range}
		\left|\nu_{k,p} T_{\mathrm{sym}}-\frac{2i-N_{\mathrm{f}}}{2N_{\mathrm{t}}}\right|<1.
	\end{equation}
	When $N_{\mathrm{t}}\to\infty$, $\displaystyle \lim_{N_{\mathrm{t}}\to\infty}S_{N_{\mathrm{t}}}(\nu_{k,p} T_{\mathrm{sym}}-\frac{2i-N_{\mathrm{f}}}{2N_{\mathrm{t}}})\neq0$ if and only if $i=N_{\mathrm{t}}\nu_{k,p}T_{\mathrm{sym}}+\frac{N_{\mathrm{f}}}{2}$. In this case, (\ref{eq:doppler discriminability}) is denoted as
	\begin{align}\label{eq:doppler layer}
		&\lim_{N_{\mathrm{t}}\to\infty}\Bigg\{\left[\frac{1}{\sqrt{N_{\mathrm{t}}}}(\mathbf{W}^{\nu}_{N_{\mathrm{t}},N_{\mathrm{f}}})^H\circ_{3}\mathcal{H}^{\mathrm{SFT}}_k\right]_{:,:,i}\nonumber\\
		-&\sum_{p=0}^{P-1}\!\beta_{k,p}\mathbf{f}^{\phi}\!(\!\theta_{k,p},\varphi_{k,p}\!)\!\bullet\!\mathbf{f}^{\tau}\!(\!\tau_{k,p}\!)\delta(i-N_{\mathrm{t}}\nu_{k,p}T_{\mathrm{sym}}\!-\!\frac{N_{\mathrm{f}}}{2})\!\!\Bigg\}\!\!=\!0.
	\end{align}
	According to (\ref{eq:doppler layer}), we can further derive (\ref{eq:lemma doppler}). This completes the proof.
\end{IEEEproof}

We next prove Theorem \ref{theorem:1}.
\begin{align}\label{eq:appendix A}
	&\mathcal{H}^{\mathrm{TB}}_k\nonumber\\
	\overset{(a)}{=}&\frac{1}{\sqrt{M_{\mathrm{c}}M_{\mathrm{r}}N_{\mathrm{c}}N_{\mathrm{t}}}}\big((\mathbf{W}^{\phi}_{M_{\mathrm{c}}})^H\otimes(\mathbf{W}^{\phi}_{M_{\mathrm{r}}})^H\big)\nonumber\\
	&\circ_{1}\Big((\mathbf{W}^{\tau}_{N_{\mathrm{c}},N_{\mathrm{g}}})^H\circ_{2}\big((\mathbf{W}^{\nu}_{N_{\mathrm{t}},N_{\mathrm{f}}})^H\circ_{3}\mathcal{H}^{\mathrm{SFT}}_k\big)\Big)\nonumber\\
	\overset{(d)}{=}&\frac{1}{\sqrt{M_{\mathrm{c}}M_{\mathrm{r}}N_{\mathrm{c}}}}\big((\mathbf{W}^{\phi}_{M_{\mathrm{c}}})^H \!\!\otimes\!(\mathbf{W}^{\phi}_{M_{\mathrm{r}}})^H\big)\!\circ_{1}\!\big((\mathbf{W}^{\tau}_{N_{\mathrm{c}},N_{\mathrm{g}}})^H\nonumber\\
	&\circ_{2}\!\sum_{p=0}^{P-1}\!\!\beta_{k,p}\!\mathbf{f}^{\phi}\!(\theta_{k,p},\varphi_{k,p})\!\bullet\!\mathbf{f}^{\tau}\!(\tau_{k,p})\bullet\mathbf{\Lambda} _{N_{\mathrm{f}}}^{(N_{\mathrm{t}}\nu_{k,p}T_{\mathrm{sym}}+\frac{N_{\mathrm{f}}}{2})}\big)\nonumber\\
	\overset{(c)}{=}&\frac{1}{\sqrt{M_{\mathrm{c}}M_{\mathrm{r}}}}\big((\mathbf{W}^{\phi}_{M_{\mathrm{c}}})^H \!\!\otimes\!(\mathbf{W}^{\phi}_{M_{\mathrm{r}}})^H\big)\circ_{1}\sum_{p=0}^{P-1}\beta_{k,p}\nonumber\\
	&\cdot\mathbf{f}^{\phi}\!(\theta_{k,p},\varphi_{k,p})\bullet\mathbf{\Lambda} _{N_{\mathrm{g}}}^{(\frac{\tau_{k,p}}{T_{\mathrm{s}}})}\bullet\mathbf{\Lambda} _{N_{\mathrm{f}}}^{(N_{\mathrm{t}}\nu_{k,p}T_{\mathrm{sym}}+\frac{N_{\mathrm{f}}}{2})}\nonumber\\
	=&\sum_{p=0}^{P-1}\beta_{k,p}\Big(\frac{1}{\sqrt{M_{\mathrm{c}}}}(\mathbf{W}^{\phi}_{M_{\mathrm{c}}})^H\circ_{1}\mathbf{f}^{c}(\theta_{k,p})\Big)\nonumber\\
	&\otimes\Big(\frac{1}{\sqrt{M_{\mathrm{r}}}}(\mathbf{W}^{\phi}_{M_{\mathrm{r}}})^H\circ_{1}\mathbf{f}^{r}(\theta_{k,p},\varphi_{k,p})\Big)\nonumber\\
	&\bullet\mathbf{\Lambda} _{N_{\mathrm{g}}}^{(\frac{\tau_{k,p}}{T_{\mathrm{s}}})}\bullet\mathbf{\Lambda} _{N_{\mathrm{f}}}^{(N_{\mathrm{t}}\nu_{k,p}T_{\mathrm{sym}}+\frac{N_{\mathrm{f}}}{2})}\nonumber\\
	\overset{(b)}{=}&\sum_{p=0}^{P-1}\!\beta_{k,p}\!\big(\!\mathbf{\Lambda} _{M_{\mathrm{c}}}^{\breve{c}_{k,p}}\!\!\otimes\!\mathbf{\Lambda} _{M_{\mathrm{r}}}^{\breve{r}_{k,p}}\!\big)\!\bullet\!\mathbf{\Lambda} _{N_{\mathrm{g}}}^{(\frac{\tau_{k,p}}{T_{\mathrm{s}}})}\!\!\bullet\!\mathbf{\Lambda} _{N_{\mathrm{f}}}^{(N_{\mathrm{t}}\nu_{k,p} T_{\mathrm{sym}}\!+\!\frac{N_{\mathrm{f}}}{2})},
\end{align}
where $(a)$ is from (\ref{eq:TB to SFT}). $(b)$, $(c)$ and $(d)$ are from Lemma \ref{lemma 1}, Lemma \ref{lemma 2} and Lemma \ref{lemma 3}. Since the different paths are independent and wide sense stationary, according to $\beta_{k,p}\sim\mathcal{CN}(0,\sigma^2_{k,p})$, we have
\begin{equation}\label{eq:beta*beta}
	\mathbb{E}\{\beta_{k,p}\beta_{k,p'}^{*}\}=\begin{cases}
		\sigma^2_{k,p},  &p=p'\\
		0, &p\neq p'
	\end{cases}.
\end{equation}
Then, substituting (\ref{eq:appendix A}) and (\ref{eq:beta*beta}) to (\ref{eq:TBF}), TBF can be denoted as follows,
\begin{align}
	&\lim_{M_{\mathrm{c}},M_{\mathrm{r}},N_{\mathrm{c}},N_{\mathrm{t}}\to\infty}\Big(\mathcal{F}_{k}\nonumber\\
	-&\sum_{p=0}^{P-1}\!\sigma^2_{k,p}\!\!\left(\!\mathbf{\Lambda} _{M_{\mathrm{c}}}^{\breve{c}_{k,p}}\!\!\otimes\!\mathbf{\Lambda} _{M_{\mathrm{r}}}^{\breve{r}_{k,p}}\!\right)\!\bullet\!\mathbf{\Lambda} _{N_{\mathrm{g}}}^{(\frac{\tau_{k,p}}{T_{\mathrm{s}}})}\!\!\bullet\!\mathbf{\Lambda} _{N_{\mathrm{f}}}^{(N_{\mathrm{t}}\nu_{k,p} T_{\mathrm{sym}}\!+\!\frac{N_{\mathrm{f}}}{2})}\Big)=\mathbf{0}.
\end{align}
This completes the proof.

\section{Proof Of Theorem \ref{theorem:2}}\label{app:Theorem 2}

To prove Theorem \ref{theorem:2}, we first define the second-order statistics of $\mathcal{H}^{\mathrm{TB}}_k$, as follows
\begin{equation}
	\mathcal{Y}_{k}\triangleq\mathbb{E}\left\{\mathcal{H}^{\mathrm{TB}}_k\bullet (\mathcal{H}^{\mathrm{TB}}_k)^{*}\right\}\in\mathbb{C}^{A\times N_{\mathrm{g}}\times N_{\mathrm{f}}\times A\times N_{\mathrm{g}}\times N_{\mathrm{f}}}.
\end{equation}
Since all elements of $\mathcal{H}^{\mathrm{TB}}_k$ are independent of each other, the non-pseudo-diagonal elements of $\mathcal{Y}_{k}$ are all zeros. According to Theorem \ref{theorem:1} and (\ref{eq:appendix A}), when $M_{\mathrm{c}}$, $M_{\mathrm{r}}$, $N_{\mathrm{c}}$ and $N_{\mathrm{t}}$ approach infinity, the pseudo-diagonal elements of $\mathcal{Y}_{k}$ are calculated as
\begin{align}\label{eq:TB sec-ord pseudo-diag}
	&\lim_{M_{\mathrm{c}},M_{\mathrm{r}},N_{\mathrm{c}},N_{\mathrm{t}}\to\infty}\Big(\left[\mathcal{Y}_{k}\right]_{i,j,l,i,j,l}-\sum_{p=0}^{P-1}\sigma^2_{k,p}\delta(j-\frac{\tau_{k,p}}{T_{\mathrm{s}}})\nonumber\\
	\cdot&\delta(i-\breve{c}_{k,p}M_{\mathrm{r}}-\breve{r}_{k,p})\delta(l-N_{\mathrm{t}}\nu_{k,p}T_{\mathrm{sym}}-\frac{N_{\mathrm{f}}}{2})\Big)=0.
\end{align}
Note that the pseudo-diagonal elements of $\mathcal{Y}_{k}$ are precisely equal to the elements of the TBF, i.e., $\left[\mathcal{Y}_{k}\right]_{i,j,l,i,j,l}=\left[\mathcal{F}_{k}\right]_{i,j,l}$. That implies,
\begin{align}\label{eq:Y to F}
	&\lim_{M_{\mathrm{c}},M_{\mathrm{r}},N_{\mathrm{c}},N_{\mathrm{t}}\to\infty}\left[\mathcal{Y}_{k}\right]_{i,j,l,i',j',l'}\nonumber\\
	=&\begin{cases}
		\lim_{M_{\mathrm{c}},M_{\mathrm{r}},N_{\mathrm{c}},N_{\mathrm{t}}\to\infty}\left[\mathcal{F}_{k}\right]_{i,j,l},&\!\! i,j,l=i',j',l'\\
		0,  &\!\! \text{else} 
	\end{cases}.
\end{align}

Given that $\mathcal{Y}_{k}$ and $\mathcal{X}_{k}$ have the same dimensions, we utilize $\mathcal{Y}_{k}$ to construct an intermediate variable, thereby simplifying the proof. We define $\dot{\mathcal{H}}^{\mathrm{TB}}_{k}\in\mathbb{C}^{A\times N_{\mathrm{c}}\times N_{\mathrm{f}}}$ as the delay domain extension of the TB domain channel response tensor,
\begin{align}
	\dot{\mathcal{H}}^{\mathrm{TB}}_{k}=&\frac{1}{\sqrt{M_{\mathrm{c}}M_{\mathrm{r}}N_{\mathrm{c}}N_{\mathrm{t}}}}(\mathbf{W}^{\nu}_{N_{\mathrm{t}},N_{\mathrm{f}}})^H\circ_{3}\bigg((\mathbf{W}^{\tau}_{N_{\mathrm{c}}})^H\nonumber\\
	&\circ_{2}\Big(\big((\mathbf{W}^{\phi}_{M_{\mathrm{c}}})^H\otimes(\mathbf{W}^{\phi}_{M_{\mathrm{r}}})^H\big)\circ_{1}\mathcal{H}^{\mathrm{SFT}}_k\Big)\bigg),
\end{align}
where $\mathbf{W}^{\tau}_{N_{\mathrm{c}}}\in\mathbb{C}^{N_{\mathrm{c}}\times N_{\mathrm{c}}}$ and it has the same definition for each element as (\ref{eq:delay transform matrices}). Subsequently, we define the delay-Doppler domain extension of $\mathcal{H}^{\mathrm{TB}}_{k}$ as follows,
\begin{align}
	\ddot{\mathcal{H}}^{\mathrm{TB}}_{k}=&\frac{1}{\sqrt{M_{\mathrm{c}}M_{\mathrm{r}}N_{\mathrm{c}}N_{\mathrm{t}}}}(\mathbf{W}^{\nu}_{N_{\mathrm{t}}})^H\circ_{3}\bigg((\mathbf{W}^{\tau}_{N_{\mathrm{c}}})^H\nonumber\\
	&\circ_{2}\Big(\big((\mathbf{W}^{\phi}_{M_{\mathrm{c}}})^H\otimes(\mathbf{W}^{\phi}_{M_{\mathrm{r}}})^H\big)\circ_{1}\mathcal{H}^{\mathrm{SFT}}_k\Big)\bigg),
\end{align}
where
\begin{equation}\label{eq:dopp extension trans mat}
	\left[\mathbf{W}^{\nu}_{N_{\mathrm{t}}}\right]_{i,j}\triangleq \frac{1}{\sqrt{N_{\mathrm{t}}}}e^{ \bar{\imath}2\pi(n_{\mathrm{T}}+\frac{i}{N_{\mathrm{s}}})\frac{(2j-N_{\mathrm{t}})}{2N_{\mathrm{t}}}},
\end{equation}
$\ddot{\mathcal{H}}^{\mathrm{TB}}_{k}\in\mathbb{C}^{A\times N_{\mathrm{c}}\times N_{\mathrm{t}}}$ and $\mathbf{W}^{\nu}_{N_{\mathrm{t}}}\in\mathbb{C}^{N_{\mathrm{t}}\times N_{\mathrm{t}}}$. The properties of $\dot{\mathcal{H}}^{\mathrm{TB}}_{k}$ and $\ddot{\mathcal{H}}^{\mathrm{TB}}_{k}$ are shown in the following lemma.
\begin{lemma}\label{lemma 4}
	For massive MIMO-OFDM systems, the delay extension of the TB domain channel response tensor tends to $\mathbf{I}_{N_{\mathrm{c}}\times N_{\mathrm{g}}}\circ_{2}\mathcal{H}^{\mathrm{TB}}_{k}$ when $N_{\mathrm{c}}$ approaches infinity, i.e., 
	\begin{equation}\label{eq:delay extension}
		\lim_{N_{\mathrm{c}}\to\infty}\Big(\dot{\mathcal{H}}^{\mathrm{TB}}_{k}-\mathbf{I}_{N_{\mathrm{c}}\times N_{\mathrm{g}}}\circ_{2}\mathcal{H}^{\mathrm{TB}}_{k}\Big)=\mathbf{0},
	\end{equation}
	where $\mathbf{I}_{N_{\mathrm{c}}\times N_{\mathrm{g}}}\in\mathbb{C}^{N_{\mathrm{c}}\times N_{\mathrm{g}}}$ and $\mathbf{I}_{N_{\mathrm{c}}\times N_{\mathrm{g}}}=\left[\mathbf{I}_{N_{\mathrm{c}}}\right]_{:,0:N_{\mathrm{g}}-1}$. When $N_{\mathrm{c}}\to\infty$ and $N_{\mathrm{t}}\to\infty$, the delay-Doppler domain extension trends to $\mathbf{I}_{N_{\mathrm{t}}\times N_{\mathrm{f}}}\circ_{3}\dot{\mathcal{H}}^{\mathrm{TB}}_{k}$, i.e.,
	\begin{equation}\label{eq:delay Dopp extension}
		\lim_{N_{\mathrm{c}},N_{\mathrm{t}}\to\infty}\Big(\ddot{\mathcal{H}}^{\mathrm{TB}}_{k}-\mathbf{I}_{N_{\mathrm{t}}\times N_{\mathrm{f}}}\circ_{3}\dot{\mathcal{H}}^{\mathrm{TB}}_{k}\Big)=\mathbf{0},
	\end{equation}
	where $\mathbf{I}_{N_{\mathrm{t}}\times N_{\mathrm{f}}}\in\mathbb{C}^{N_{\mathrm{t}}\times N_{\mathrm{f}}}$ and $\mathbf{I}_{N_{\mathrm{t}}\times N_{\mathrm{f}}}=\left[\mathbf{I}_{N_{\mathrm{t}}}\right]_{:,0:N_{\mathrm{s}}:N_{\mathrm{t}}-1}$.
\end{lemma}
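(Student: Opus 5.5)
We prove Lemma \ref{lemma 4} by the same route as Lemmas \ref{lemma 2} and \ref{lemma 3}, one transform at a time. Expand the definition of $\dot{\mathcal{H}}^{\mathrm{TB}}_{k}$ and substitute the channel model (\ref{eq:SFT channel}). The mode products act on different dimensions and therefore commute, so we may apply the angle transform and the Doppler transform $(\mathbf{W}^{\nu}_{N_{\mathrm{t}},N_{\mathrm{f}}})^H$ first. These are exactly the operations used to form $\mathcal{H}^{\mathrm{TB}}_k$. The only difference between $\dot{\mathcal{H}}^{\mathrm{TB}}_{k}$ and $\mathcal{H}^{\mathrm{TB}}_k$ is that the delay transform uses the full square DFT $\mathbf{W}^{\tau}_{N_{\mathrm{c}}}$ rather than its $N_{\mathrm{g}}$-column restriction. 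Hence the first $N_{\mathrm{g}}$ delay slices of $\dot{\mathcal{H}}^{\mathrm{TB}}_{k}$ coincide with $\mathcal{H}^{\mathrm{TB}}_k$ identically, with no limit required. It remains to show that the slices with index $N_{\mathrm{g}}\le i\le N_{\mathrm{c}}-1$ vanish as $N_{\mathrm{c}}\to\infty$. Once this holds, the limit is $\mathbf{I}_{N_{\mathrm{c}}\times N_{\mathrm{g}}}\circ_{2}\mathcal{H}^{\mathrm{TB}}_{k}$, because that operation simply zero-pads $\mathcal{H}^{\mathrm{TB}}_k$ to delay length $N_{\mathrm{c}}$.

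For the vanishing slices, reuse the column-wise computation (\ref{eq:delay discriminability}) for each path. The $i$-th delay slice is a sum over $p$ of $\beta_{k,p}$ times angle and Doppler factors times $S_{N_{\mathrm{c}}}\big((\tau_{k,p}/T_{\mathrm{s}}-i)/N_{\mathrm{c}}\big)$ and a unimodular phase. The cyclic prefix assumption gives $\tau_{k,p}/T_{\mathrm{s}}<N_{\mathrm{g}}\le i$, and (\ref{eq:delay range}) gives $|\tau_{k,p}/T_{\mathrm{s}}-i|<N_{\mathrm{c}}$. The argument of $S_{N_{\mathrm{c}}}$ is therefore a nonzero number in $(-1,1)$, and by the limit $S_{L}(x)\to\delta(x)$ stated after (\ref{eq:frac sin}), every such term tends to $0$. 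Only finitely many paths contribute, so the whole slice tends to $\mathbf{0}$, which proves (\ref{eq:delay extension}).

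The second claim, (\ref{eq:delay Dopp extension}), follows the same pattern along the third dimension. Again exchange the order of the mode products and apply Lemma \ref{lemma 3}, now using the square matrix $\mathbf{W}^{\nu}_{N_{\mathrm{t}}}$ from (\ref{eq:dopp extension trans mat}) in place of $\mathbf{W}^{\nu}_{N_{\mathrm{t}},N_{\mathrm{f}}}$. The $i$-th Doppler slice then carries the factor $S_{N_{\mathrm{t}}}\big(\nu_{k,p}T_{\mathrm{sym}}-\frac{2i-N_{\mathrm{t}}}{2N_{\mathrm{t}}}\cdot\frac{1}{N_{\mathrm{s}}}\big)$, after adapting (\ref{eq:doppler discriminability}) to the finer Doppler grid. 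Nonvanishing in the limit forces $i$ onto a specific grid point. Because $\nu_{k,p}$ lies on the coarse grid of spacing $1/(N_{\mathrm{t}}T_{\mathrm{sym}})$ with $N_{\mathrm{f}}$ values, these surviving indices are exactly the columns $0,N_{\mathrm{s}},2N_{\mathrm{s}},\dots$ selected by $\mathbf{I}_{N_{\mathrm{t}}\times N_{\mathrm{f}}}$. On those columns the slice agrees, in the limit, with the corresponding Doppler slice of $\dot{\mathcal{H}}^{\mathrm{TB}}_{k}$; all other slices tend to zero by the argument above. Combined with (\ref{eq:delay extension}), this yields (\ref{eq:delay Dopp extension}).

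The main obstacle is this index bookkeeping. Three things must be checked: that the fine-grid and coarse-grid Doppler exponents (including the $n_{\mathrm{T}}$ phase terms) match up to the stated selection pattern $[\mathbf{I}_{N_{\mathrm{t}}}]_{:,0:N_{\mathrm{s}}:N_{\mathrm{t}}-1}$, that the Dirichlet-kernel arguments stay inside $(-1,1)$ so no aliasing occurs, and that the residual unimodular phases agree between the two sides. I would first verify the exponent identity explicitly for a single path, then extend by linearity over $p$.
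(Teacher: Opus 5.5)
Your route is the paper's. The retained slices coincide exactly with those of $\mathcal{H}^{\mathrm{TB}}_{k}$ (resp.\ $\dot{\mathcal{H}}^{\mathrm{TB}}_{k}$); this is the zero-row structure of $\check{\mathbf{W}}^{\tau}$ and $\check{\mathbf{W}}^{\nu}$. The discarded slices are then to be killed by the Dirichlet kernel, and the gap is in that killing step. Your criterion, ``the argument of $S_{L}$ is a nonzero number in $(-1,1)$, hence the term tends to $0$'', is not valid. The limit $S_{L}(x)\to\delta(x)$ holds at fixed $x$, whereas here the argument depends on $L$ and drifts towards $0$ or $\pm1$. For the delay part the real reason is integrality. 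For $N_{\mathrm{g}}\le i\le N_{\mathrm{c}}-1$, $N_{\mathrm{c}}x=\frac{\tau_{k,p}}{T_{\mathrm{s}}}-i$ is a nonzero integer and $|x|<1$. Hence $\sin(\pi N_{\mathrm{c}}x)=0\neq\sin(\pi x)$, and these slices vanish exactly for every $N_{\mathrm{c}}$. This uses the on-grid assumption on $\tau_{k,p}$ essentially, since off-grid delays leak into the discarded slices. With that justification, (\ref{eq:delay extension}) is fine.

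For the Doppler part the same step fails, and (\ref{eq:delay Dopp extension}) is in fact false with $\mathbf{W}^{\nu}_{N_{\mathrm{t}}}$ as defined in (\ref{eq:dopp extension trans mat}).
\begin{itemize}
\item Write $\nu_{k,p}T_{\mathrm{sym}}=\frac{m_p-N_{\mathrm{f}}/2}{N_{\mathrm{t}}}$ and use $N_{\mathrm{t}}=N_{\mathrm{f}}N_{\mathrm{s}}$. The kernel argument in slice $j$ is then $x_j=\frac{N_{\mathrm{s}}m_p-j}{N_{\mathrm{s}}N_{\mathrm{t}}}$, so $N_{\mathrm{t}}x_j=m_p-j/N_{\mathrm{s}}$ is not an integer when $N_{\mathrm{s}}\nmid j$.
\item For $j=N_{\mathrm{s}}m_p+r$ with $0<|r|<N_{\mathrm{s}}$, you get $|S_{N_{\mathrm{t}}}(x_j)|\to\frac{|\sin(\pi r/N_{\mathrm{s}})|}{\pi|r|/N_{\mathrm{s}}}$. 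This is about $0.99$ for $r=\pm1$ and $N_{\mathrm{s}}=14$.
\item So the intermediate slices of $\ddot{\mathcal{H}}^{\mathrm{TB}}_{k}$ carry almost the full path amplitude, while $\mathbf{I}_{N_{\mathrm{t}}\times N_{\mathrm{f}}}\circ_{3}\dot{\mathcal{H}}^{\mathrm{TB}}_{k}$ is zero there. The difference therefore does not tend to $\mathbf{0}$.
\end{itemize}
The cause is that the columns of $\mathbf{W}^{\nu}_{N_{\mathrm{t}}}$ are spaced $\frac{1}{N_{\mathrm{s}}N_{\mathrm{t}}T_{\mathrm{sym}}}$ apart in Doppler. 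That is $N_{\mathrm{s}}$ times finer than the resolution of $N_{\mathrm{t}}$ symbols, so the columns are not orthogonal.

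The paper's proof disposes of this step with ``similarly proved as in Lemma~\ref{lemma 1}''. You have therefore reproduced its argument including its gap, and the index bookkeeping you postponed is exactly where it breaks. A true version needs $\mathbf{W}^{\nu}_{N_{\mathrm{t}}}$ to be a genuinely unitary $N_{\mathrm{t}}$-point DFT, e.g.\ with phase $2\pi(n_{\mathrm{T}}N_{\mathrm{s}}+i)\frac{2j-N_{\mathrm{t}}}{2N_{\mathrm{t}}}$ in entry $(i,j)$. Then the columns of $\mathbf{W}^{\nu}_{N_{\mathrm{t}},N_{\mathrm{f}}}$ reappear as a contiguous block of $N_{\mathrm{f}}$ columns centred at $N_{\mathrm{t}}/2$, not as columns $0{:}N_{\mathrm{s}}{:}N_{\mathrm{t}}-1$. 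The remaining slices then vanish exactly, by the same integrality argument as for delay.
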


\begin{IEEEproof}
	We first prove the properties of the delay domain extension of $\mathcal{H}^{\mathrm{TB}}_{k}$. With the operational property of $n$ mode product, we can derive
	\begin{align}
		&\mathbf{I}_{N_{\mathrm{c}}\times N_{\mathrm{g}}}\circ_{2}\mathcal{H}^{\mathrm{TB}}_{k}\nonumber\\
		=&\frac{1}{\sqrt{M_{\mathrm{c}}M_{\mathrm{r}}N_{\mathrm{c}}N_{\mathrm{t}}}}(\mathbf{W}^{\nu}_{N_{\mathrm{t}},N_{\mathrm{f}}})^H\circ_{3}\bigg(\mathbf{I}_{N_{\mathrm{c}}\times N_{\mathrm{g}}}(\mathbf{W}^{\tau}_{N_{\mathrm{c}},N_{\mathrm{g}}})^H\nonumber\\
		&\circ_{2}\Big(\big((\mathbf{W}^{\phi}_{M_{\mathrm{c}}})^H\otimes(\mathbf{W}^{\phi}_{M_{\mathrm{r}}})^H\big)\circ_{1}\mathcal{H}^{\mathrm{SFT}}_k\Big)\bigg).
	\end{align}
	Consequently, we have
	\begin{align}\label{eq:delay extension derive}
		&\lim_{N_{\mathrm{c}}\to\infty}\left(\dot{\mathcal{H}}^{\mathrm{TB}}_{k}-\mathbf{I}_{N_{\mathrm{c}}\times N_{\mathrm{g}}}\circ_{2}\mathcal{H}^{\mathrm{TB}}_{k}\right)\nonumber\\
		=&\lim_{N_{\mathrm{c}}\to\infty}\frac{1}{\sqrt{M_{\mathrm{c}}M_{\mathrm{r}}N_{\mathrm{c}}N_{\mathrm{t}}}}(\mathbf{W}^{\nu}_{N_{\mathrm{t}},N_{\mathrm{f}}})^H\nonumber\\
		&\circ_{3}\bigg(\check{\mathbf{W}}^{\tau}\circ_{2}\Big(\big((\mathbf{W}^{\phi}_{M_{\mathrm{c}}})^H\otimes(\mathbf{W}^{\phi}_{M_{\mathrm{r}}})^H\big)\circ_{1}\mathcal{H}^{\mathrm{SFT}}_k\Big)\bigg),
	\end{align}
	where
	\begin{align}
		\check{\mathbf{W}}^{\tau}=&(\mathbf{W}^{\tau}_{N_{\mathrm{c}}})^H-\mathbf{I}_{N_{\mathrm{c}}\times N_{\mathrm{g}}}(\mathbf{W}^{\tau}_{N_{\mathrm{c}},N_{\mathrm{g}}})^H\nonumber\\
		=&\left[\mathbf{0}_{N_{\mathrm{c}}\times N_{\mathrm{g}}},\check{\mathbf{W}}^{\tau}_{N_{\mathrm{c}},(N_{\mathrm{c}}-N_{\mathrm{g}})}\right]^H,
	\end{align}
	and $\check{\mathbf{W}}^{\tau}_{N_{\mathrm{c}},(N_{\mathrm{c}}-N_{\mathrm{g}})}\in\mathbb{C}^{N_{\mathrm{c}}\times (N_{\mathrm{c}}-N_{\mathrm{g}})}$ is the submatrix of the last $(N_{\mathrm{c}}-N_{\mathrm{g}})$ columns of $\mathbf{W}^{\tau}_{N_{\mathrm{c}}}$. Since the first $N_{\mathrm{g}}$ rows of $\check{\mathbf{W}}^{\tau}$ are zeros, the first $N_{\mathrm{g}}$ indices of the second dimension are all zeros after $2$ module product operation, i.e., $\displaystyle\lim_{N_{\mathrm{c}}\to\infty}\left[\dot{\mathcal{H}}^{\mathrm{TB}}_{k}-\mathbf{I}_{N_{\mathrm{c}}\times N_{\mathrm{g}}}\circ_{2}\mathcal{H}^{\mathrm{TB}}_{k}\right]_{:,0:N_{\mathrm{g}}-1,:}=\mathbf{0}$. 
	
	For the last $(N_{\mathrm{c}}-N_{\mathrm{g}})$ rows of $\check{\mathbf{W}}^{\tau}$,
	\begin{align}\label{eq:delay extension 2 model}
		&\left[\frac{1}{\sqrt{N_{\mathrm{c}}}}(\check{\mathbf{W}}^{\tau}_{N_{\mathrm{c}},(N_{\mathrm{c}}-N_{\mathrm{g}})})^H\circ_{2}\mathcal{H}^{\mathrm{SFT}}_{k}\right]_{:,i,:}\nonumber\\
		=&\frac{1}{N_{\mathrm{c}}}\!\!\!\sum_{j=0}^{N_{\mathrm{c}}-1}\!\!\sum_{p=0}^{P-1}\!\!\beta_{k,p}\mathbf{f}^{\phi}\!(\theta_{k,p},\!\varphi_{k,p})e^{\bar{\imath}2\pi\frac{j(i+N_{\mathrm{g}})}{N_{\mathrm{c}}}}\!\! e^{-\bar{\imath} 2\pi j\tau_{k,p}\Delta f}\!\!\bullet\!\mathbf{f}^{\nu}(\nu_{k,p})\nonumber\\
		=&\sum_{p=0}^{P-1}\beta_{k,p}\mathbf{f}^{\phi}(\theta_{k,p},\varphi_{k,p})e^{-\bar{\imath}\pi\frac{(N_{\mathrm{c}}-1)}{N_{\mathrm{c}}}(\frac{\tau_{k,p}}{T_{\mathrm{s}}}-i-N_{\mathrm{g}})}\nonumber\\
		&\cdot S_{N_{\mathrm{c}}}\left(\frac{(\frac{\tau_{k,p}}{T_{\mathrm{s}}}-i-N_{\mathrm{g}})}{N_{\mathrm{c}}}\right)\bullet\mathbf{f}^{\nu}(\nu_{k,p}).
	\end{align}
	The remaining proof of (\ref{eq:delay extension}) is similar that of the proof of Lemma \ref{lemma 1}, and (\ref{eq:delay Dopp extension}) can be similarly proved as well.
	
	Next, we will adopt a similar approach to demonstrate the properties of the delay-Doppler domain extension, i.e.,
	\begin{align}\label{eq:delay dopp extension derive}
		&\lim_{N_{\mathrm{c}},N_{\mathrm{t}}\to\infty}\left(\ddot{\mathcal{H}}^{\mathrm{TB}}_{k}-\mathbf{I}_{N_{\mathrm{t}}\times N_{\mathrm{f}}}\circ_{3}\dot{\mathcal{H}}^{\mathrm{TB}}_{k}\right)\nonumber\\
		=&\lim_{N_{\mathrm{c}},N_{\mathrm{t}}\to\infty}\frac{1}{\sqrt{M_{\mathrm{c}}M_{\mathrm{r}}N_{\mathrm{c}}N_{\mathrm{t}}}}\check{\mathbf{W}}^{\nu}\circ_{3}\bigg((\mathbf{W}^{\tau}_{N_{\mathrm{c}}})^H\nonumber\\
		&\circ_{2}\Big(\big((\mathbf{W}^{\phi}_{M_{\mathrm{c}}})^H\otimes(\mathbf{W}^{\phi}_{M_{\mathrm{r}}})^H\big)\circ_{1}\mathcal{H}^{\mathrm{SFT}}_k\Big)\bigg),
	\end{align}
	where
	\begin{align}
		\check{\mathbf{W}}^{\nu}=&(\mathbf{W}^{\nu}_{N_{\mathrm{t}}})^H-\mathbf{I}_{N_{\mathrm{t}}\times N_{\mathrm{f}}}(\mathbf{W}^{\nu}_{N_{\mathrm{t}},N_{\mathrm{f}}})^H\nonumber\\
		=&\left[0,\check{\mathbf{W}}^{\nu}_{N_{\mathrm{t}},1:N_{\mathrm{s}}-1},\cdots,0,\check{\mathbf{W}}^{\nu}_{N_{\mathrm{t}},(N_{\mathrm{f}}-1)N_{\mathrm{s}}+1:N_{\mathrm{f}}N_{\mathrm{s}}-1}\right]^H.
	\end{align}
	and $\check{\mathbf{W}}^{\nu}_{N_{\mathrm{t}},iN_{\mathrm{s}}+1:(i+1)N_{\mathrm{s}}-1}\in\mathbb{C}^{N_{\mathrm{t}}\times (N_{\mathrm{s}}-2)}(i=0,\cdots,N_{\mathrm{f}}-1)$ is the $(iN_{\mathrm{s}}+1)\text{th}$ through $\big((i+1)N_{\mathrm{s}}-1\big)\text{th}$ column element of $\mathbf{W}^{\nu}_{N_{\mathrm{t}}}$. For rows of zero in $\check{\mathbf{W}}^{\nu}$, we have $\displaystyle\left[\lim_{N_{\mathrm{c}},N_{\mathrm{t}}\to\infty}\left(\ddot{\mathcal{H}}^{\mathrm{TB}}_{k}-\mathbf{I}_{N_{\mathrm{t}}\times N_{\mathrm{f}}}\circ_{3}\dot{\mathcal{H}}^{\mathrm{TB}}_{k}\right)\right]_{:,:,0:N_{\mathrm{s}}:N_{\mathrm{t}}-1}=\mathbf{0}$. For the non-zero $\check{\mathbf{W}}^{\nu}_{N_{\mathrm{t}},iN_{\mathrm{s}}+1:(i+1)N_{\mathrm{s}}-1}$, we derive it as follows,
	\begin{align}\label{eq:delay dopp extension 3 model}
		&\left[\frac{1}{\sqrt{N_{\mathrm{t}}}}(\mathbf{W}^{\nu}_{N_{\mathrm{t}}})^H\circ_{3}\mathcal{H}^{\mathrm{SFT}}_k\right]_{:,:,j}\nonumber\\
		=&\sum_{p=0}^{P-1}\!\beta_{k,p}\mathbf{f}^{\phi}\!(\theta_{k,p},\varphi_{k,p})\!\bullet\!\mathbf{f}^{\tau}\!(\tau_{k,p})e^{\bar{\imath}2\pi n_{\mathrm{T}}(N_{\mathrm{s}} \nu_{k,p} T_{\mathrm{sym}}-\frac{2j-N_{\mathrm{t}}}{2N_{\mathrm{t}}})}\nonumber\\
		&\cdot \! e^{\bar{\imath}\pi(\nu_{k,p} T_{\mathrm{sym}}-\frac{2j-N_{\mathrm{t}}}{2N_{\mathrm{t}}N_{\mathrm{s}}})(\!N_{\mathrm{t}}-1\!)}S_{N_{\mathrm{t}}}(\nu_{k,p} T_{\mathrm{sym}}\!-\!\frac{2j-N_{\mathrm{t}}}{2N_{\mathrm{t}}N_{\mathrm{s}}}).
	\end{align}
	Then, (\ref{eq:delay Dopp extension}) can be similarly proved as that in Lemma \ref{lemma 1} as well.
\end{IEEEproof}

We define the second-order statistics of $\ddot{\mathcal{H}}^{\mathrm{TB}}_{k}$, as follows
\begin{align}\label{eq:second statis ddot y}
	\ddot{\mathcal{Y}}_{k}\triangleq&\mathbb{E}\left\{\ddot{\mathcal{H}}^{\mathrm{TB}}_{k}\bullet (\ddot{\mathcal{H}}^{\mathrm{TB}}_{k})^{*}\right\}\nonumber\\
	=&(\mathbf{W}^{\nu}_{N_{\mathrm{t}}})^T \circ_{6}\Bigg((\mathbf{W}^{\tau}_{N_{\mathrm{c}}})^T\circ_{5}\bigg(((\mathbf{W}^{\phi}_{M_{\mathrm{c}}})^T\otimes(\mathbf{W}^{\phi}_{M_{\mathrm{r}}})^T)\nonumber\\
	&\circ_{4}\Big((\mathbf{W}^{\nu}_{N_{\mathrm{t}}})^H\circ_{3}\big((\mathbf{W}^{\tau}_{N_{\mathrm{c}}})^H\circ_{2}\big(((\mathbf{W}^{\phi}_{M_{\mathrm{c}}})^H\otimes(\mathbf{W}^{\phi}_{M_{\mathrm{r}}})^H)\nonumber\\
	&\circ_{1}(\mathcal{H}^{\mathrm{SFT}}_k\bullet(\mathcal{H}^{\mathrm{SFT}}_k)^{*})\big)\big)\Big)\bigg)\Bigg)\cdot \frac{1}{M_{\mathrm{c}}M_{\mathrm{r}}N_{\mathrm{c}}N_{\mathrm{t}}}\nonumber\\
	=&\frac{1}{M_{\mathrm{c}}M_{\mathrm{r}}N_{\mathrm{c}}N_{\mathrm{t}}}\ddot{\mathcal{W}}\diamond_{6}\mathcal{X}_{k},
\end{align}
where $\ddot{\mathcal{W}}\in\mathbb{C}^{A\times N_{\mathrm{c}}\times N_{\mathrm{t}}\times A\times N_{\mathrm{c}}\times N_{\mathrm{t}}\times A\times N_{\mathrm{c}}\times N_{\mathrm{t}}\times A\times N_{\mathrm{c}}\times N_{\mathrm{t}}}$ and
\begin{align}
	&\left[\ddot{\mathcal{W}}\right]_{i_1,i_2,i_3,i_4,i_5,i_6,j_1,j_2,j_3,j_4,j_5,j_6}\nonumber\\
	\triangleq&\left[(\mathbf{W}^{\phi}_{M_{\mathrm{c}}})^H \!\otimes(\mathbf{W}^{\phi}_{M_{\mathrm{r}}})^H\right]_{i_1,j_1}\!\left[(\mathbf{W}^{\tau}_{N_{\mathrm{c}}})^H\right]_{i_2,j_2}\left[(\mathbf{W}^{\nu}_{N_{\mathrm{t}}})^H\right]_{i_3,j_3}\nonumber\\
	&\cdot\left[(\mathbf{W}^{\phi}_{M_{\mathrm{c}}})^T \!\otimes(\mathbf{W}^{\phi}_{M_{\mathrm{r}}})^T\right]_{i_4,j_4}\!\left[(\mathbf{W}^{\tau}_{N_{\mathrm{c}}})^T\right]_{i_5,j_5}\left[(\mathbf{W}^{\nu}_{N_{\mathrm{t}}})^T\right]_{i_6,j_6}.
\end{align}
Equation (\ref{eq:second statis ddot y}) shows the transformation relationship between $\ddot{\mathcal{Y}}_{k}$ and $\mathcal{X}_{k}$. Then, we calculate the traces of the tensors and convert them into the element-wise summation operations according to (\ref{eq:tensor trace}), denoted as
\begin{equation}\label{eq:trace Y extension}
	\text{Tr}\{\ddot{\mathcal{Y}}_{k}\bullet\ddot{\mathcal{Y}}_{k'}^{*}\}=\frac{\text{Sum}\{(\ddot{\mathcal{W}}\diamond_{6}\mathcal{X}_{k})\odot(\ddot{\mathcal{W}}\diamond_{6}\mathcal{X}_{k'})^{*}\}}{(M_{\mathrm{c}}M_{\mathrm{r}}N_{\mathrm{c}}N_{\mathrm{t}})^{2}}.
\end{equation}
where $\text{Sum}\{\cdot\}$ denotes the summation of all the elements of a tensor. The following lemma describes a property of the combination of $\text{Sum}\{\cdot\}$ and $n$ module product, which is beneficial in the subsequent derivation process.

\begin{lemma}\label{lemma 5}
	Regarding the tensors $\mathcal{T}_{1}\in\mathbb{C}^{I_0\times \cdots\times I_N\times\cdots\times I_M}$ and $\mathcal{T}_{2}\in\mathbb{C}^{I_0\times \cdots\times I_N\times\cdots\times I_M}$, they exhibit the following operational relationship with the $N+1$ module product of an unitary matrix $\mathbf{O}\in\mathbb{C}^{I_N\times I_N}$, where $N+1\le M$:
	\begin{equation}
		\text{Sum}\{(\mathbf{O}\circ_{N+1}\mathcal{T}_{1})\odot(\mathbf{O}\circ_{N+1}\mathcal{T}_{2})^{*}\}=\text{Sum}\{\mathcal{T}_{1}\odot\mathcal{T}_{2}^{*}\}.
	\end{equation}
\end{lemma}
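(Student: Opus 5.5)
The identity is the tensor form of the fact that a unitary change of basis along one mode preserves the Hermitian inner product. Since $\text{Sum}\{\mathcal{A}\odot\mathcal{B}^{*}\}$ is exactly that inner product, I would expand both sides elementwise and collapse the inner sum with $\mathbf{O}^H\mathbf{O}=\mathbf{I}_{I_N}$. To keep notation light, I write $\mathbf{i}=(i_0,\dots,i_M)$ for a full multi-index and $\mathbf{i}_{\setminus N}$ for all indices except the one on mode $N+1$, which I denote $i_N$.

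First, by the definition of the $n$ mode product,
\begin{equation}
	\left[\mathbf{O}\circ_{N+1}\mathcal{T}_{1}\right]_{\mathbf{i}}=\sum_{j=0}^{I_N-1}[\mathbf{O}]_{i_N,j}\left[\mathcal{T}_{1}\right]_{\mathbf{i}_{\setminus N},j}.
\end{equation}
The analogous formula holds for $\mathcal{T}_2$, with a second dummy index $j'$.

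Second, I multiply the first expression by the conjugate of the second and sum over all entries. This gives
\begin{equation}
	\sum_{\mathbf{i}_{\setminus N}}\sum_{j,j'}\Big(\sum_{i_N}[\mathbf{O}]_{i_N,j}[\mathbf{O}]^{*}_{i_N,j'}\Big)\left[\mathcal{T}_{1}\right]_{\mathbf{i}_{\setminus N},j}\left[\mathcal{T}_{2}\right]^{*}_{\mathbf{i}_{\setminus N},j'}.
\end{equation}
All sums are finite, so the order of summation may be exchanged freely.

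Third, the bracketed inner sum equals $[\mathbf{O}^H\mathbf{O}]_{j',j}$, which is $\delta(j-j')$ because $\mathbf{O}$ is unitary. Hence only the diagonal terms $j=j'$ survive, and the expression reduces to
\begin{equation}
	\sum_{\mathbf{i}_{\setminus N}}\sum_{j}\left[\mathcal{T}_{1}\right]_{\mathbf{i}_{\setminus N},j}\left[\mathcal{T}_{2}\right]^{*}_{\mathbf{i}_{\setminus N},j}=\text{Sum}\{\mathcal{T}_{1}\odot\mathcal{T}_{2}^{*}\},
\end{equation}
which is the claim.

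There is no real obstacle; the argument is the standard one-line inner-product computation. The only point needing care is which Gram product unitarity supplies. The inner sum is over the row index of $\mathbf{O}$ with two different column indices, so the identity used is $\mathbf{O}^H\mathbf{O}=\mathbf{I}$, not $\mathbf{O}\mathbf{O}^H=\mathbf{I}$. Both hold for a square unitary matrix, so either reading works.

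One caution when this lemma is applied later with the transform matrices: matrices such as $\mathbf{W}^{\phi}_{M}$, $\mathbf{W}^{\tau}_{N_{\mathrm{c}}}$ and $\mathbf{W}^{\nu}_{N_{\mathrm{t}}}$ should be checked to be genuinely unitary, since each is a scaled, phase-shifted DFT matrix. The same holds for their transposes and Kronecker products, which the subsequent derivation also passes through mode products. Any residual phase factors from the index shifts are diagonal unitary matrices and therefore do not affect unitarity.
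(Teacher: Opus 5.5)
Your proof is correct and is essentially the paper's: the paper likewise regroups the sum as $\sum_{i,j}[\mathbf{O}^{H}\mathbf{O}]_{j,i}\,\text{Sum}\{[\mathcal{T}_1]_{\ldots,i,\ldots}\odot[\mathcal{T}_2^{*}]_{\ldots,j,\ldots}\}$ and uses unitarity to keep only the terms with $i=j$. Your closing caution is apt, and the phase factors are indeed harmless, but $\mathbf{W}^{\nu}_{N_{\mathrm{t}}}$ as defined has kernel $e^{\bar{\imath}2\pi ij/(N_{\mathrm{s}}N_{\mathrm{t}})}$, a frequency grid $N_{\mathrm{s}}$ times finer than the size-$N_{\mathrm{t}}$ DFT, so its columns are not orthogonal for $N_{\mathrm{s}}>1$; the paper's later application of this lemma to it is therefore unjustified, though the lemma itself is unaffected.
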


\begin{IEEEproof}
	This lemma can be easily proved by rearranging the order of the summation of each element. Taking the $N+1$ module product as an example, we have
	\begin{align}\label{eq:sum lemma}
		&\text{Sum}\{(\mathbf{O}\circ_{N+1}\mathcal{T}_{1})\odot(\mathbf{O}\circ_{N+1}\mathcal{T}_{2})^{*}\}\nonumber\\
		=&\sum_{i=0}^{I_N-1}\sum_{j=0}^{I_N-1}\Big(\text{Sum}\{\left[\mathbf{O}\right]_{:,i}\odot\left[\mathbf{O}^{*}\right]_{:,j}\}\nonumber\\
		&\cdot\text{Sum}\{\underbrace{\left[\mathcal{T}_{1}\right]_{:,\cdots,i,\cdots,:}\odot\left[\mathcal{T}_{2}^{*}\right]_{:,\cdots,j,\cdots,:}}_{i,j \text{are in the}(N+1) \text{th dimension}}\}\Big).
	\end{align}
	Since $\mathbf{O}$ is a square matrix, the summation operation can be simplified to matrix multiplication,
	\begin{equation}
		\text{Sum}\{\left[\mathbf{O}\right]_{:,i}\odot\left[\mathbf{O}^{*}\right]_{:,j}\}=\left[\mathbf{O}^{H}\mathbf{O}\right]_{j,i}.
	\end{equation}
	Leveraging the property of the unitary matrix, $\mathbf{O}^{H}\mathbf{O}=\mathbf{I}_{I_N}$, $\text{Sum}\{\left[\mathbf{O}\right]_{:,i}\odot\left[\mathbf{O}^{*}\right]_{:,j}\}\neq 0$ only if $i=j$. Consequently, (\ref{eq:sum lemma}) is further simplified as follows,
	\begin{align}
		&\text{Sum}\{(\mathbf{O}\circ_{N+1}\mathcal{T}_{1})\odot(\mathbf{O}\circ_{N+1}\mathcal{T}_{2})^{*}\}\nonumber\\
		=&\sum_{i=0}^{I_N-1}\text{Sum}\{\left[\mathcal{T}_{1}\right]_{:,\cdots,i,\cdots,:}\odot\left[\mathcal{T}_{2}^{*}\right]_{:,\cdots,i,\cdots,:}\}\nonumber\\
		=&\text{Sum}\{\mathcal{T}_{1}\odot\mathcal{T}_{2}^{*}\}.
	\end{align}
	This completes the proof.
\end{IEEEproof}

The transform matrices corresponding to the three beam domains are all generated by the unitary discrete Fourier transform (DFT) matrix \cite{JTXG2024}, hence $\mathbf{W}^{\nu}_{N_{\mathrm{t}}}$, $\mathbf{W}^{\tau}_{N_{\mathrm{c}}}$ and $(\mathbf{W}^{\phi}_{M_{\mathrm{c}}}\otimes\mathbf{W}^{\phi}_{M_{\mathrm{r}}})$ are unitary matrices. By substituting (\ref{eq:second statis ddot y}) into (\ref{eq:trace Y extension}) and applying Lemma \ref{lemma 5} to each dimension, we have
\begin{equation}\label{eq:trace Y exten to X}
	\text{Tr}\{\ddot{\mathcal{Y}}_{k}\bullet\ddot{\mathcal{Y}}_{k'}^{*}\}=\frac{\text{Sum}\{\mathcal{X}_{k}\odot\mathcal{X}_{k'}^{*}\}}{(M_{\mathrm{c}}M_{\mathrm{r}}N_{\mathrm{c}}N_{\mathrm{t}})^{2}}=\frac{\text{Tr}\{\mathcal{X}_{k}\bullet\mathcal{X}_{k'}^{*}\}}{(M_{\mathrm{c}}M_{\mathrm{r}}N_{\mathrm{c}}N_{\mathrm{t}})^{2}}.
\end{equation}

According to Lemma \ref{lemma 4},
\begin{equation}
	\lim_{N_{\mathrm{c}},N_{\mathrm{t}}\to\infty}\bigg(\ddot{\mathcal{H}}^{\mathrm{TB}}_{k}-\mathbf{I}_{N_{\mathrm{t}}\times N_{\mathrm{f}}}\circ_{3}\left(\mathbf{I}_{N_{\mathrm{c}}\times N_{\mathrm{g}}}\circ_{2}\mathcal{H}^{\mathrm{TB}}_{k}\right)\bigg)=\mathbf{0}.
\end{equation}
\begin{align}\label{eq:Y exten to Y}
	\lim_{N_{\mathrm{c}},N_{\mathrm{t}}\to\infty}\Bigg(\ddot{\mathcal{Y}}_{k}-&\mathbf{I}_{N_{\mathrm{t}}\times N_{\mathrm{f}}}^{*}\circ_{6}\bigg(\mathbf{I}_{N_{\mathrm{c}}\times N_{\mathrm{g}}}^{*}\circ_{5}\Big(\mathbf{I}_{N_{\mathrm{t}}\times N_{\mathrm{f}}}\nonumber\\
	&\circ_{3}\big(\mathbf{I}_{N_{\mathrm{c}}\times N_{\mathrm{g}}}\circ_{2}\mathcal{Y}_{k}\big)\Big)\bigg)\Bigg)=\mathbf{0}.
\end{align}
When $M_{\mathrm{c}}$, $M_{\mathrm{r}}$, $N_{\mathrm{c}}$ and $N_{\mathrm{t}}$ are sufficiently large, $\text{Tr}\{\ddot{\mathcal{Y}}_{k}\bullet\ddot{\mathcal{Y}}_{k'}^{*}\}$ can also be calculated as
\begin{align}\label{eq:trace Y exten to F}
	\text{Tr}\{\ddot{\mathcal{Y}}_{k}\bullet\ddot{\mathcal{Y}}_{k'}^{*}\}=&\text{Sum}\{\ddot{\mathcal{Y}}_{k}\odot\ddot{\mathcal{Y}}_{k'}^{*}\}\overset{(\text{a})}{\approx}\text{Sum}\{\mathcal{Y}_{k}\odot\mathcal{Y}_{k'}^{*}\}\nonumber\\
	\overset{(\text{b})}{\approx}&\text{Sum}\{\mathcal{F}_{k}\odot\mathcal{F}_{k'}^{*}\}=\text{Tr}\{\mathcal{F}_{k}\bullet\mathcal{F}_{k'}^{*}\},
\end{align}
where (a) follows from (\ref{eq:Y exten to Y}) and Lemma \ref{lemma 5}, (b) follows from (\ref{eq:Y to F}). Referring to (\ref{eq:trace Y exten to X}) and (\ref{eq:trace Y exten to F}), we have $\text{Tr}\{\mathcal{F}_{k}\bullet\mathcal{F}_{k'}^{*}\}\approx\frac{\text{Tr}\{\mathcal{X}_{k}\bullet\mathcal{X}_{k'}^{*}\}}{(M_{\mathrm{c}}M_{\mathrm{r}}N_{\mathrm{c}}N_{\mathrm{t}})^{2}}$. For Frobenius norm, we have
\begin{align}
	\left \| \mathcal{X}_{k} \right \|_\mathrm{F}=\sqrt{\text{Sum}\{\mathcal{X}_{k}\odot\mathcal{X}_{k}^{*}\}}\approx M_{\mathrm{c}}M_{\mathrm{r}}N_{\mathrm{c}}N_{\mathrm{t}}\left \| \mathcal{F}_{k} \right \|_\mathrm{F}.
\end{align}
Finally, combined with the definition of collinearity in (\ref{eq:coll define}), (\ref{eq:coll equal}) can be proved. This completes the proof.

\bibliographystyle{IEEEtran}
\bibliography{EE_AI}

\end{document}